\newenvironment{block}[1][t]
  {\begin{array}[#1]{@{}l@{}}}
  {\end{array}}
\definecolor{lightgray}{gray}{0.90}
\newcommand{\Gbox}[1]{\colorbox{lightgray}{$#1$}}
\declaretheorem{theorem}
\declaretheorem[sibling=theorem]{lemma}
\declaretheorem{definition}
\declaretheorem[style=remark,numbered=no]{case}
\reservestyle{\oblang}{\mathsf}
\newcommand{\Pow}{\ensuremath{\mathcal{P}}} 
\newcommand{\Bool}{\mathsf{Bool}}
\newcommand{\oblset}[1]{\textsc{#1}}
\newcommand{\Label}{\oblset{Label}}
\newcommand{\Type}{\oblset{Type}}
\newcommand{\Var}{\oblset{Var}}
\newcommand{\Value}{\oblset{Value}}
\newcommand{\Term}{\oblset{Term}}
\newcommand{\GTerm}{\oblset{GTerm}}
\newcommand{\GType}{\oblset{GType}}
\newcommand{\GLabel}{\oblset{GLabel}}
\newcommand{\setof}[1]{\set{\overline{#1}}}
\newcommand{\fff}{\textsf{false}}
\newcommand{\ff}{\fff} 
\newcommand{\ttt}{\textsf{true}}
\newcommand{\ite}[3]{\<if> #1 \<then> #2 \<else> #3}
\newcommand{\lsec}{$\lambda_\text{SEC}$\xspace}
\newcommand{\lgsec}{$\lambda_{\consistent{\text{SEC}}}$\xspace}
\newcommand{\dom}{\mathit{dom}}
\newcommand{\cod}{\mathit{cod}}
\newcommand{\valid}{\mathit{valid}}
\newcommand{\gprec}{\sqsubseteq} 
\newcommand{\sub}{<:}
\newcommand{\csub}{\lesssim}
\newcommand{\meet}{\sqcap}
\newcommand{\union}{\cup}
\newcommand{\setint}{\cap}
\newcommand{\finto}{\stackrel{\text{fin}}{\rightharpoonup}}
\newcommand{\?}{\textsf{\upshape ?}} 
\newcommand{\consistent}[1]{\widetilde{#1}}
\newcommand{\collecting}[1]{\wideparen{#1}}
\newcommand{\cT}{\consistent{T}} 
\newcommand{\clT}{\collecting{T}}
\newcommand{\ct}{\consistent{t}} 
\newcommand{\clcod}{\collecting{\cod}}
\newcommand{\cldom}{\collecting{\dom}}
\newcommand{\cP}{\consistent{P}}
\newcommand{\subjoin}{\mathbin{\begin{turn}{90}$\hspace{-0.2em}<:$\end{turn}}}
\newcommand{\submeet}{\mathbin{\begin{turn}{-90}$\hspace{-0.8em}<:$\end{turn}}}
\newcommand{\cssubmeet}{\consistent{\submeet}}
\newcommand{\cssubjoin}{\consistent{\subjoin}}
\newcommand{\ltop}{\textsf{H}}
\newcommand{\lbot}{\textsf{L}}
\newcommand{\lx}{\ell} 
\newcommand{\ul}{\?}
\newcommand{\clx}{{\tilde{\lx}}} 
\newcommand{\cll}{\collecting{\lx}} 
\newcommand{\cS}{{\consistent{S}}} 
\newcommand{\clS}{\collecting{S}}
\newcommand{\subl}{\preccurlyeq}
\newcommand{\csubl}{\;\consistent{\subl}\;}
\newcommand{\ljoincore}{\begin{turn}{90}$\hspace{-0.2em}\prec$\end{turn}}
\newcommand{\lmeetcore}{\begin{turn}{270}$\hspace{-0.6em}\prec$\end{turn}}
\newcommand{\ljoin}{\mathbin{\ljoincore}}
\newcommand{\lmeet}{\mathbin{\lmeetcore}}
\newcommand{\cjoin}{\mathbin{\consistent{\ljoincore}}}
\newcommand{\cmeet}{\mathbin{\consistent{\lmeetcore}}}
\newcommand{\cljoin}{\cjoin}
\newcommand{\clmeet}{\cmeet}
\newcommand{\gammal}{\gamma_\lx}
\newcommand{\alphal}{\alpha_\lx}
\newcommand{\gammas}{\gamma_S}
\newcommand{\alphas}{\alpha_S}
\newcommand{\slabel}{\mathit{label}}
\newcommand{\interior}[1]{\mathcal{I}_{#1}}
\newcommand{\Isub}{\interior{<:}}
\newcommand{\trans}[1]{\circ^{#1}}
\renewcommand{\merge}[1]{\triangle^{#1}}
\newcommand{\TermT}[1]{\oblset{Term}_{#1}}
\newcommand{\EvTerm}{\oblset{EvTerm}}
\newcommand{\EvValue}{\oblset{EvValue}}
\newcommand{\SimpleValue}{\oblset{SimpleValue}}
\newcommand{\EvFrame}{\oblset{EvFrame}}
\newcommand{\TmFrame}{\oblset{TmFrame}}
\newcommand{\pr}[1]{\braket{{#1}}}
\newcommand{\cast}[2]{\evcast{\evpr{#1}}{#2}}
\newcommand{\error}{\textup{\textbf{error}}}
\newcommand{\evt}{\mathit{et}}
\newcommand{\evv}{\mathit{ev}}
\newcommand{\red}{\longmapsto}
\newcommand{\iapp}[1]{\mathbin{@^{#1}}}
\newcommand{\invcod}{\mathit{icod}}
\newcommand{\invdom}{\mathit{idom}}
\newcommand{\ie}{\emph{i.e.}\xspace}
\newcommand{\eg}{\emph{e.g.}\xspace}
\newsavebox{\cTsave}
\savebox{\cTsave}{$\cT$}
\newsavebox{\clTsave}
\savebox{\clTsave}{$\clT$}
\newcommand{\ev}{\varepsilon}
\newcommand{\evcast}[2]{#1#2}
\newcommand{\evpr}[1]{\braket{#1}}
\newcommand{\ncsubl}{\not \hspace{-2mm}\csubl}
\newcommand{\Evidence}{\oblset{Evidence}}
\newcommand{\lobs}{\zeta}
\newcommand{\rel}{\approx_{\lobs}}
\newcommand{\rcomp}[1]{\mathcal{C}(#1)} 
\newcommand{\bval}[1]{bval(#1)}
\newcommand{\subst}{\sigma}
\newcommand{\itm}[1]{{t^{\cS_{#1}}}}
\renewcommand{\cast}[2]{#1#2}
\newcommand{\D}{\mathcal{D}}
\newcommand{\E}{\mathcal{E}}
\begin{document}

\title{Deriving a Simple Gradual Security Language}

\authorinfo{Ronald Garcia}{
Software Practices Lab \\
Department of Computer Science\\
University of British Columbia}
{rxg@cs.ubc.ca}
\authorinfo{\'Eric Tanter}
{PLEIAD Laboratory\\
Computer Science Department (DCC)\\
University of Chile}
{etanter@dcc.uchile.cl}

\maketitle

\begin{abstract}
  Abstracting Gradual Typing (AGT) is an approach to systematically deriving
  gradual counterparts to static type disciplines~\cite{garciaAl:popl2016}. The
  approach consists of defining the semantics of gradual types by interpreting
  them as sets of static types, and then defining an optimal abstraction back
  to gradual types. These operations are used to lift the static discipline to
  the gradual setting. The runtime semantics of the gradual language
  then arises as reductions on gradual typing derivations.

  To demonstrate the flexibility of AGT, we gradualize \linebreak
  \lsec~\cite{zdancewic},
  the prototypical security-typed language, with respect to only security
  labels rather than entire types, yielding a type system that ranges gradually
  from simply-typed to securely-typed. We establish noninterference for the
  gradual language, called \lgsec, using Zdancewic's logical relation proof
  method.  Whereas prior work presents gradual security \emph{cast} languages,
  which require explicit security casts, this work yields the first gradual
  security \emph{source} language, which requires no explicit casts.
\end{abstract}

\section{Introduction}

Gradual typing has often been viewed as a means to combine the agility benefits
of \emph{dynamic languages}, like Python and Ruby with the
reliability benefits of \emph{static languages} like OCaml and
Scala.  This paper, in a line of work on the foundations of gradual
typing, explores the idea that static and dynamic are merely relative notions.

This relativistic view of gradual typing is not new.  Work on gradual
information flow security by \citet{disney11flow} and
\citet{fennellThiemann:csf2013} develop languages where only information-flow
security properties are subject to a mix of dynamic and static checking.
\citet{banados14effects} develop a language where only computational effect
capabilities are gradualized.  In each of these cases, the ``fully-dynamic''
corner of the gradual language is not dynamic at all by typical standards, but
rather simply typed.  However, those languages support seamless migration
toward a \emph{more precise} typing discipline that subsumes simple typing.

To explore this notion, we revisit the idea of gradual information-flow
security.  Our tool of inquiry is a new approach to the foundations of gradual
typing called Abstracting Gradual Typing (AGT) \cite{garciaAl:popl2016}.  AGT
is a technique for systematically deriving gradual type systems by interpreting
gradual types as sets of static types.  That work developed a traditional
gradual type system with subtyping, introducing an \emph{unknown type} \?. But
AGT was directly inspired by~\citet{banados14effects}, who used an early
version of these techniques to gradualize only effects. However, they develop
the dynamic semantics of gradual effects in the traditional ad hoc fashion.
Here, we bring the approach full circle, deriving a complete static and dynamic
semantics for a gradual counterpart to the \lsec language of \citet{zdancewic}.

In their simplest form, security-typed languages require values and types to be
annotated with security labels, indicating their confidentiality level.  The
security type system guarantees \emph{noninterference}, \ie, that
more-confidential information does not alter the less-confidential results of
any expression.

We prove that the resulting gradual language, called \lgsec, is not only
\emph{safe} in that it never unexpectedly crashes, but that it is \emph{sound}
in that it honours the information-flow invariants of the precisely typed
terms.  The former property is unsurprising, since even the most imprecisely
typed program still maintains the simple typing discipline, which is enough to
establish the safety of the operational semantics.  The soundness of the
language with respect to the security type discipline, \ie, that basic
information flow properties are respected, is the key property.

The prior work in gradual security typing developed gradual \emph{cast
  languages}, which require explicit type casts to connect imprecisely typed
terms with precisely typed terms.  This is akin to the intermediate languages
of traditional gradually-typed languages.  This work presents the first gradual
\emph{source} language, where no explicit casts are needed: they are introduced
by the language semantics.  Furthermore, following the AGT approach, the
runtime semantics are induced by the proof of type safety for \lsec, yielding a
crisp connection to that precise static type discipline.

As with the original work on AGT, we can straightforwardly establish proper
adaptations of the refined criteria for gradually typed languages.  We will do
so in this ongoing work.

Ultimately this work views gradual typing as a theory of \emph{imprecise
  typing} rather than dynamic checking.  Indeed dynamic checking is an
inevitable consequence of this approach, but the focus here is on the types and
their meaning.  We believe that this broader view of gradual typing can widen
the reach of gradual typing beyond its current niche of interest among
dynamic language enthusiasts.  Furthermore, we believe that AGT generalizes the
foundations of gradual typing enough to support a wide variety of
gradual type disciplines.

\section{The Static Language: \lsec}

\begin{figure}[h]
  \begin{small}
  \begin{displaymath}
    \begin{array}{rcll}
      \multicolumn{4}{c}{
        \lx \in \Label,\quad
        S \in \Type,\quad
        x \in \oblset{Var},\quad
        b \in \oblset{Bool},\quad
        \oplus \in \oblset{BoolOp}
      }\\
      \multicolumn{4}{c}{
        t \in \Term,\quad
      
        r \in \oblset{RawValue}\quad      
        v \in \oblset{Value}\quad
        \Gamma \in \oblset{Var} \finto \Type
      } \\[3mm]
      S & ::= & \Bool_{\lx} | S ->_{\lx} S & \text{(types)} \\
      b & ::= & \ttt | \ff & \text{(Booleans)}\\
      r & ::= & b | \lambda x:S.t & \text{(raw values)}\\
      v & ::= & r_\lx & \text{(values)}\\
      t & ::= & v |  t\;t | t \oplus t | \ite{t}{t}{t} | t :: S 
      & \text{(terms)} \\
      \oplus & ::= & \land | \lor | \implies & \text{(operations)} \\
    \end{array}
  \end{displaymath}

\framebox{$\Gamma |- t : S$}
 \begin{mathpar}
    \inference[(Sx)]{x:S\in\Gamma}{\Gamma |- x : S}
    \and
    \inference[(Sb)]{}{\Gamma |- b_\lx : \Bool_\lx}
    \and
    \inference[(S$\lambda$)]{
      \Gamma, x:S_1 |- t : S_2
    }{
      \Gamma |- (\lambda x : S_1.t)_\lx : S_1 ->_{\lx} S_2
    }
    \and
    \inference[(S$\oplus$)]{
      \Gamma |- t_1 : \Bool_{\lx_1} & 
      \Gamma |- t_2 : \Bool_{\lx_2} &
    }{
      \Gamma |- t_1 \oplus t_2 : \Bool_{(\lx_1 \ljoin \lx_2)}
    }
    \and
    \inference[(Sapp)]{
      \Gamma |- t_1 : S_{11} ->_\lx S_{12} &
      \Gamma |- t_2 : S_2 & S_2 \sub S_{11}
    }{
      \Gamma |- t_1\;t_2 : S_{12} \ljoin \lx
    }
    \and
    \inference[(Sif)]{
      \Gamma |- t : \Bool_\lx &
      \Gamma |- t_1 : S_1 & \Gamma |- t_2 : S_2
    }{
      \Gamma |- \ite{t}{t_1}{t_2} : (S_1 \subjoin S_2) \ljoin \lx
    }
    \and
    \inference[(S::)]{
      \Gamma |- t : S_1 &
      S_1 <: S_2
    }{
      \Gamma |- t :: S_2 : S_2
    }
  \end{mathpar}

\framebox{$S \sub S$}
 \begin{mathpar} 
   \inference{
     \lx \subl \lx'
   }{
     \Bool_\lx \sub \Bool_{\lx'}
   }
   \and
   \inference{
     S'_1 \sub S_1 &
     S_2 \sub S'_2 &
     \lx \subl \lx'
   }{
     S_1 ->_{\lx} S_2 \sub S'_1 ->_{\lx'} S'_2
   }
  \end{mathpar}

\framebox{$S \subjoin S$, $S \submeet S$}
\begin{equation*}
  \begin{block}
    \subjoin : \Type \times \Type \rightharpoonup \Type \\[0.2em]
    \Bool_\lx \subjoin \Bool_{\lx'} = \Bool_{(\lx \ljoin \lx')} \\
    (S_{11} ->_\lx S_{12}) \subjoin (S_{21} ->_{\lx'} S_{22}) = 
    (S_{11} \submeet S_{21}) ->_{(\lx \ljoin \lx')} (S_{12} \subjoin S_{22}) \\
    S \subjoin S \text{ undefined otherwise} \\
    \\
    \submeet : \Type \times \Type \rightharpoonup \Type \\[0.2em]
    \Bool_\lx \submeet \Bool_{\lx'} = \Bool_{(\lx \lmeet \lx')} \\
    (S_{11} ->_{\lx} S_{12}) \submeet (S_{21} ->_{\lx'} S_{22}) = 
    (S_{11} \subjoin S_{21}) ->_{(\lx \lmeet \lx')} (S_{12} \submeet S_{22}) \\
    S \submeet S \text{ undefined otherwise} \\
  \end{block}
\end{equation*}

\end{small}
 \caption{\lsec: Syntax and Static Semantics}
  \label{fig:lsec}
\end{figure}

We first present the \lsec language, with some differences from the original
presentation~\cite{zdancewic}. The most notable changes are that the type
system is syntax directed, and the runtime semantics are small-step structural
operational semantics rather than big-step natural semantics.

Figure~\ref{fig:lsec} presents the syntax and type system for \lsec.  The
language extends a simple typing discipline with a lattice of \emph{security
  labels} $\lx$.  All program values are ascribed security labels, which are
partial ordered $\subl$ from low security to high-security and include
top and bottom security labels $\top$ and $\bot$.  The \lsec types $S$ extend a
simple type discipline by associating a security label to each type constructor.

Rules for variables, constants and functions are straightforward.  The
(S$\oplus$) rule for binary boolean operations ensures that the confidentiality
of combining two values is the least upper bound, or \emph{join} $\ljoin$, of
the confidentiality of the two sub-expressions. Similarly, when applying a
function (Sapp), the result type joins the label of the function's result type
$S_{12}$ with the label $\lx$ of the function type. For this, the rule uses a
notion of \emph{label stamping} on types:%
\footnote{We overload the join notation $\ljoin$ throughout, and rely on the
  context to disambiguate.}
\begin{displaymath}
\begin{array}{c}
\Bool_{\lx} \ljoin \lx' = \Bool_{(\lx \ljoin {\lx'})} \\
(S_1 ->_{\lx} S_2) \ljoin \lx' = S_1 ->_{(\lx \ljoin {\lx'})} S_2 \\
\end{array}
\end{displaymath}

Rule (Sapp) also appeals to a notion of subtyping $S \sub S$.  Subtyping is
induced by the ordering on security labels.  It allows lower-security values to
flow to higher-security contexts, but not vice-versa.  Rule (Sif) specifies
that the type of a conditional is the
\emph{subtyping join}
$\subjoin$ of the types of the branches, further stamped to incorporate the
confidentiality of the predicate expression's label $l$.  The latter is
necessary to forbid indirect flow of information through the conditional.  As
usual, the join of two function types is defined in terms of the \emph{meet}
$\submeet$ of the argument types, which in turn relies on the \emph{label meet}
operator $\lmeet$.
The (S::) rule introduces ascription, which can move the type of an expression
to any supertype.

These syntax-directed typing rules define a type system that is sound and
complete with respect to Zdancewic's.  The following propositions use $|-^{Z}$
for the type system of~\citet{zdancewic}, and consider only terms without
ascription (\ie, the common subset of the two systems).

\begin{restatable}{proposition}{syntypein}
  If $\Gamma |- t : S$ then $\Gamma |-^{Z} t : S$.
\end{restatable}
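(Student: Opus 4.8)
The plan is to induct on the derivation of $\Gamma \vdash t : S$. Since we restrict to ascription-free terms, the final rule is never (S::), so it is one of (Sx), (Sb), (S$\lambda$), (S$\oplus$), (Sapp), (Sif), and in each case I build a $\vdash^{Z}$ derivation of $\Gamma \vdash^{Z} t : S$. The guiding observation is that our syntax-directed rules have merely absorbed into (Sapp), (S$\oplus$) and (Sif) the uses of Zdancewic's subsumption rule (and of label stamping) that his presentation applies explicitly; so the proof amounts to reinserting those subsumption steps.

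The cases (Sx) and (Sb) are immediate, the rules being shared, and (S$\lambda$) follows from the induction hypothesis on the body together with Zdancewic's abstraction rule. For (Sapp) the induction hypotheses give $\Gamma \vdash^{Z} t_1 : S_{11} \to_{\lx} S_{12}$ and $\Gamma \vdash^{Z} t_2 : S_2$ with $S_2 \sub S_{11}$; one subsumption step retypes $t_2$ at $S_{11}$, Zdancewic's application rule applies, and --- depending on the exact shape of that rule --- either it already stamps the result or one further subsumption raises the result type to $S_{12} \ljoin \lx$, using that $S \sub S \ljoin \lx$ for all $S$, $\lx$. Rule (S$\oplus$) follows the same pattern, using $\Bool_{\lx_i} \sub \Bool_{(\lx_1 \ljoin \lx_2)}$.

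The only case needing real work is (Sif). I would first prove, as an auxiliary lemma by mutual induction on types, that when $S_1 \subjoin S_2$ is defined it is an upper bound of $S_1$ and $S_2$ for $\sub$, and dually that $S_1 \submeet S_2$ (when defined) is a lower bound; the mutual dependency is forced by contravariance of $\sub$ in function domains, since an upper bound on the codomain side of $\subjoin$ requires a lower bound --- namely $\submeet$ --- on the domain side. I would also record that stamping is monotone, i.e.\ $S \sub S'$ implies $S \ljoin \lx \sub S' \ljoin \lx$, and that $\ljoin$, $\lmeet$ on labels are the lattice operations, so $\lx_i \subl (\lx_1 \ljoin \lx_2)$, etc. Granting this, the hypotheses $\Gamma \vdash^{Z} t_1 : S_1$ and $\Gamma \vdash^{Z} t_2 : S_2$ can both be lifted by subsumption to $S_1 \subjoin S_2$, Zdancewic's conditional rule (which stamps by the guard label $\lx$) then yields $(S_1 \subjoin S_2) \ljoin \lx = S$, and if his rule does not itself stamp, one last subsumption does.

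The main obstacle is thus the auxiliary upper-bound/lower-bound lemma for $\subjoin$ and $\submeet$: the mutual induction and the bookkeeping around contravariance require care, though nothing is deep. The remaining friction --- lining up the precise statements of Zdancewic's application and conditional rules with ours after inserting subsumption --- is dispatched by the monotonicity of stamping, which lets any residual stamping be recovered by one more subsumption.
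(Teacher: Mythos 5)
Your proposal follows essentially the same route as the paper's proof: induction on the syntax-directed derivation, reinserting Zdancewic's subsumption steps in the (Sapp) and (Sif) cases, and relying on exactly the auxiliary lemmas the paper proves --- the mutually-inductive upper-bound/lower-bound property of $\subjoin$/$\submeet$ (forced by contravariance, as you note) and the fact that $S \sub S \ljoin \lx$. The only cosmetic difference is that the paper lifts each branch of the conditional directly to $(S_1 \subjoin S_2) \ljoin \lx$ in one subsumption step, whereas you defer the stamping to the rule or a final subsumption; this changes nothing of substance.
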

\begin{proof}
  By induction on $\Gamma |- t : S$.  
\end{proof}

\begin{restatable}{proposition}{insyntype}
  If $\Gamma |-^{Z} t : S$ Then $\Gamma |- t : S'$ for some $S' <: S$.
\end{restatable}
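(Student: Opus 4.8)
The plan is to induct on the derivation of $\Gamma |-^{Z} t : S$. The syntax-directed system of Figure~\ref{fig:lsec} replaces Zdancewic's explicit subsumption rule by built-in uses of $\subjoin$ and label stamping, so the expectation is that it assigns $t$ its \emph{least} type and that every type Zdancewic assigns is a supertype of that one. The subsumption case is then immediate: if $\Gamma |-^{Z} t : S$ is derived from $\Gamma |-^{Z} t : S_0$ with $S_0 \sub S$, the induction hypothesis gives $\Gamma |- t : S'$ with $S' \sub S_0$, and transitivity of $\sub$ gives $S' \sub S$.

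Before the remaining cases I would collect the facts about $\sub$ that the reassembly needs. (i) $\sub$ is reflexive and transitive, and it is \emph{constructor-preserving}: $S' \sub \Bool_\lx$ forces $S' = \Bool_{\lx'}$ with $\lx' \subl \lx$, and $S' \sub S_{11} ->_\lx S_{12}$ forces $S' = S_{11}' ->_{\lx'} S_{12}'$ with $S_{11} \sub S_{11}'$, $S_{12}' \sub S_{12}$, and $\lx' \subl \lx$ (immediate, since the two subtyping rules are syntax-directed on the outermost constructor). (ii) Label stamping is monotone: $S' \sub S$ and $\lx' \subl \lx$ imply $(S' \ljoin \lx') \sub (S \ljoin \lx)$, using that the label join $\ljoin$ is the lattice join. (iii) $\subjoin$ and $\submeet$ compute least upper and greatest lower bounds for $\sub$ wherever they are defined; in particular, if $S_i' \sub S_i$ for $i = 1,2$ and $S_1 \subjoin S_2$ is defined --- equivalently, $S_1$ and $S_2$ have a common outermost constructor --- then $S_1' \subjoin S_2'$ is defined and $(S_1' \subjoin S_2') \sub (S_1 \subjoin S_2)$. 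Fact (iii) has to be proved by simultaneous induction on types together with the dual statement for $\submeet$, because the function-type clauses of the two operators are mutually recursive, with their roles swapped by the contravariance of the arrow's domain.

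With (i)--(iii) in hand the other cases are mechanical: for each structure-introducing rule of Zdancewic's system I would apply the induction hypothesis to the premises, use constructor-preservation wherever the rule expected a type of a given shape (so that the corresponding rule of Figure~\ref{fig:lsec} still applies), re-derive the conclusion with that rule, and close the gap with monotonicity of $\ljoin$ and the least-upper-bound property of $\subjoin$ (plus, where Zdancewic's rule carries a side condition relating a label to the result type, that condition together with $\lx' \subl \lx$ from the hypothesis). The one case worth spelling out is application: the induction hypothesis turns $\Gamma |-^{Z} t_1 : S_{11} ->_\lx S_{12}$ into $\Gamma |- t_1 : S_{11}' ->_{\lx'} S_{12}'$ with $S_{11} \sub S_{11}'$, $S_{12}' \sub S_{12}$, $\lx' \subl \lx$, and $\Gamma |-^{Z} t_2 : S_{11}$ into $\Gamma |- t_2 : S_2'$ with $S_2' \sub S_{11} \sub S_{11}'$, so rule (Sapp) applies and derives $\Gamma |- t_1\;t_2 : S_{12}' \ljoin \lx'$, which by (ii) is $\sub S_{12} \ljoin \lx$.

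The main obstacle is not any individual case but getting fact (iii) stated and proved correctly: $\subjoin$ and $\submeet$ are partial operations defined by a mutual recursion that swaps their roles on function types, and the lemma must carry preservation of definedness alongside the subtyping inequality, which forces the simultaneous induction. Once that lemma is available --- and it is also what makes the conditional case go through, via the join of the branch types --- the application case is the only place that genuinely exercises the contravariant bookkeeping, and everything else is a direct replay of the premises through the induction hypothesis.
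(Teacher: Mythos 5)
Your proposal is correct and follows essentially the same route as the paper: induction on the $\vdash^{Z}$ derivation with subsumption discharged by transitivity, a monotonicity lemma for label stamping (the paper's Lemma~\ref{lemma:subtypetwoljoin}), and a simultaneous-induction lemma for $\subjoin$/$\submeet$ (the paper's Lemma~\ref{lemma:subtypesubjoin}), with application and conditional as the only nontrivial cases. The only cosmetic difference is that you state the $\subjoin$ fact as monotonicity-plus-definedness where the paper states it as the upper-bound property; the two are interchangeable here, and your explicit tracking of definedness is if anything slightly more careful than the paper's.
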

\begin{proof}
  By induction on $\Gamma |-^{Z} t : S$.  
\end{proof}

\begin{figure}[t]
\begin{small}
\begin{equation*}
  \begin{array}{rcll}
    t & ::= & \ldots | t \ljoin \lx & \text{(term stamping)}\\
    f & ::= &  [] \oplus t | v \oplus [] | []\;t | v\;[] & \text{(frames)}\\
      & & \ite{[]}{t}{t} | [] \ljoin \lx & 
  \end{array}
\end{equation*}

\begin{mathpar}
  \inference[(S$\ljoin$)]{
    \Gamma |- t : S
  }{
    \Gamma |- t \ljoin \lx : S \ljoin \lx
  }
\end{mathpar}

\framebox{$t --> t$}
\noindent\textbf{Notions of Reduction}
\begin{equation*}
\begin{block}
  \mbox{$b_1$}_{\lx_1} \oplus \mbox{$b_2$}_{\lx_2} --> 
(b_1 \;\llbracket \oplus \rrbracket\; b_2)_{(\lx_1 \ljoin \lx_2)} 
\\[0.8em]
(\lambda x:S.t)_\lx \; v --> ([v/x]t) \ljoin \lx 
\\[0.8em]
\ite{\ttt_\lx}{t_1}{t_2} --> t_1 \ljoin \lx
\\[0.8em]
\ite{\ff_\lx}{t_1}{t_2} --> t_2 \ljoin \lx
\\[0.8em]
r_{\lx_1} \ljoin \lx_2 --> r_{(\lx_1 \ljoin \lx_2)}
\end{block}
\end{equation*}
\framebox{$t \red t$}
\textbf{Reduction}
\begin{mathpar}
  \inference[]{
    t_1 --> t_2
  }{
    t_1 \red t_2
  } 
  \and
  \inference[]{
    t_1 \red t_2
  }{
    f[t_1] \red f[t_2]
  } 
\end{mathpar}
\end{small}
 \caption{\lsec: Small-Step Dynamic Semantics}
  \label{fig:lsec-dyn}
\end{figure}

\paragraph{Dynamic semantics.}
The dynamic semantics of \lsec were originally presented as big step
semantics~\cite{zdancewic}. Figure~\ref{fig:lsec-dyn} presents the equivalent
small-step semantics. Of particular interest is the new label stamping form on
terms, which we call \emph{term stamping} $t \ljoin \lx$.  Term stamping allows
small-step reduction to retain security information that is merged with the
resulting value of the nested term.

This small-step semantics coincides with the big-step semantics of
\lsec~\cite{zdancewic}. Note that we establish the equivalence to the source
\lsec language (Figure~\ref{fig:lsec}), \ie, without term stamping, since it is
only needed internally to support small-step reduction.
As usual, $\red^{*}$ denotes the reflexive, transitive closure of $\red$.

\begin{restatable}[]{proposition}{bsss}
$t \Downarrow v$ if and only if $t \red^{*} v$.
\end{restatable}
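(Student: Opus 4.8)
The plan is to prove the two directions of the biconditional separately, each by an induction, bridging the big-step relation $\Downarrow$ and the small-step relation $\red^{*}$ through the standard technique of first establishing a congruence/compositionality lemma for $\red^{*}$ under evaluation frames. Concretely, I would first record the auxiliary fact that if $t_1 \red^{*} t_1'$ then $f[t_1] \red^{*} f[t_1']$ for every frame $f$, which follows immediately by induction on the length of the reduction using the congruence rule for $\red$. I would also note the obvious fact that values are $\red$-normal (no notion of reduction and no frame rule applies to a bare $v = r_\lx$), so that a terminating small-step sequence ending in a value is unambiguous.

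For the forward direction ($t \Downarrow v$ implies $t \red^{*} v$), I would proceed by induction on the derivation of $t \Downarrow v$. Each big-step rule decomposes a compound term into evaluation of its immediate subterms followed by a single notion of reduction; the corresponding small-step simulation plugs the subterm reductions into the appropriate frames using the congruence lemma, then fires the matching notion of reduction, then (where the big-step rule's result involves a substitution, a branch, or an operator application) continues with the induction hypothesis on the resulting term. For instance, for application one would have $t_1 \red^{*} (\lambda x{:}S.t)_\lx$ and $t_2 \red^{*} v_2$ from the IH, so $t_1\,t_2 \red^{*} (\lambda x{:}S.t)_\lx\, v_2 \red ([v_2/x]t)\ljoin\lx$, and then the IH on the premise $[v_2/x]t \ljoin \lx \Downarrow v$ finishes it. The cases for $\oplus$, conditionals, and term stamping are analogous and each reduces to chaining congruence, one notion of reduction, and possibly an IH. A subtlety here is that the big-step semantics is stated for the \emph{source} language without term stamping, so I must be careful that the intermediate terms appearing in the small-step trace (which do contain $t \ljoin \lx$) are handled by the appropriate clauses — i.e. the induction is really over an extended big-step relation on $\TTerm$ including stamping, or equivalently one observes that $r_{\lx_1}\ljoin\lx_2 \red r_{(\lx_1\ljoin\lx_2)}$ disposes of every stamping that the other rules introduce.

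For the backward direction ($t \red^{*} v$ implies $t \Downarrow v$), the cleanest route is to prove the sharper statement that $\red$ preserves big-step evaluation, i.e.\ if $t \red t'$ and $t' \Downarrow v$ then $t \Downarrow v$ (and $v \Downarrow v$ trivially), from which the result follows by induction on the length of $t \red^{*} v$. This sharper statement is itself proved by induction on the derivation of $t \red t'$: in the base case $t \red t'$ is a notion of reduction, and one checks directly that any $v$ with $t' \Downarrow v$ also satisfies $t \Downarrow v$ by composing the big-step rule for the outer construct with the (already-established, by reflexivity/the value case) big-step evaluations of the now-reduced subterms; in the inductive case $t = f[t_1]$, $t' = f[t_2]$ with $t_1 \red t_2$, and one performs case analysis on $f$, using that $f[t_2] \Downarrow v$ forces a big-step derivation whose premise evaluates $t_2$ to some intermediate value, applies the IH to replace that by an evaluation of $t_1$, and reassembles the derivation.

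The main obstacle is bookkeeping rather than conceptual: getting the frame congruence and the value-normal-form facts exactly right, and in particular handling the term-stamping form $t \ljoin \lx$ — it appears on the small-step side as an internal device, so one must either work with the big-step relation extended to stamped terms throughout or carefully argue that every stamping introduced during small-step reduction is immediately eliminable, so that the two semantics genuinely agree on the stamping-free source fragment as claimed. Once the congruence lemma and the "$\red$ preserves (and reflects) $\Downarrow$" lemma are in hand, both directions are routine rule inductions with one case per syntactic construct.
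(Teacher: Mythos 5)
Your proposal is correct and follows essentially the same plan as the paper's (very terse) proof: the forward direction by induction on the big-step derivation using admissibility of the $\Downarrow$ rules in $\red^{*}$ (via frame congruence and chaining), and the backward direction by induction on the length of the reduction using a single-step-preserves-evaluation/inversion argument. You also correctly flag the one real subtlety --- that term stamping $t \ljoin \lx$ lives only in the internal small-step language and must be handled by extending $\Downarrow$ or eliminating stampings --- which the paper itself only remarks on in passing.
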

\begin{proof}
  \mbox{}
  \begin{case}[only if]
    By induction on $t \Downarrow v$, using the
    admissibility of the
    $\Downarrow$ rules in $\red^{*}$.\\
  \end{case}

  \begin{case}[if]
    By induction on the length of the reduction $t \red^{*} v$. Straightforward
    case analysis on $t$ using the admissibility of the inversion lemmas for
    $\Downarrow$ in $\red^{*}$.
  \end{case}
\end{proof}

\section{Gradualizing \lsec}
\label{sec:gradualizing-lsec-1}

In gradualizing \lsec, we could decide to support unknown information in both
types and security labels. Here, to show the flexibility of the AGT approach,
we gradualize \lsec only in terms of security labels, thereby supporting a
gradual evolution between simply-typed programs and securely-typed programs.

\subsection{Meaning of Unknown Security Type Information}
\label{ssec:sec-meaning}

To gradualize our security types, we introduce a notion of \emph{gradual
  labels} and define their meanings in terms of concrete labels of a given
security lattice.

\begin{definition}[Gradual labels]
A gradual label $\clx$ is either a label $\lx$ or the unknown label $\ul$.
\end{definition}
\begin{displaymath}
\begin{array}{rcll}
\multicolumn{4}{l}{\clx \in \GLabel}\\
  \clx & ::= & \lx \;|\; \ul & \text{(gradual labels)}
\end{array}
\end{displaymath}

As with static security typing, we develop gradual security types by assigning
a gradual label to every type constructor.

\begin{definition}[Gradual security type]
A gradual security type is a gradual type labeled with a gradual label:
\end{definition}
\begin{displaymath}
\begin{array}{rcll}
\multicolumn{4}{l}{\cS \in \GType}\\
\cS & ::= & \Bool_{\clx} | \cS ->_{\clx} \cS & \text{(gradual types)}\\
\end{array}
\end{displaymath}

To give meaning to gradual security types, we use the AGT approach of defining
a concretization function that maps gradual security types to sets of static
security types.  This concretization is the natural lifting of a concretization
for gradual labels.

\begin{definition}[Label Concretization]
Let ${\gammal : \GLabel -> \Pow(\Label)}$ be defined as follows:
\begin{align*}
  \gammal(\lx) &= \set{\lx} \\
  \gammal(\ul) &= \Label
\end{align*}
\end{definition}

\noindent We give meaning to the unknown label by saying that it represents 
\emph{any label}.  On the other hand, any static label represents only itself.

Since we are operating on complete lattices, the sound and optimal abstraction
function from sets of labels to gradual labels is fully determined by the
concretization.  We characterize it below.

\begin{definition}[Label Abstraction]
  Let 
  $\alphal : \Pow(\Label) -> \GLabel$
  be defined as follows:
  \begin{align*}
    \alphal(\set{\lx})  &= \lx \\
    \alphal(\emptyset)  &\text{ is undefined} \\
    \alphal(\cll)  &= \;\ul \text{  otherwise}
  \end{align*}
\end{definition}

\begin{restatable}[$\alphal$ is Sound]{proposition}{laisound}
  \label{prop:lai-sound}
  If $\cll$ is not empty, then 
  ${\cll \subseteq \gammal(\alphal(\cll))}$.
\end{restatable}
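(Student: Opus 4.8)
The plan is to do a case analysis on the nonempty set $\cll \subseteq \Label$, mirroring exactly the three-way case split in the definition of $\alphal$. Since $\cll \neq \emptyset$ by hypothesis, $\alphal(\cll)$ is defined (the only undefined case is $\emptyset$), so the right-hand side $\gammal(\alphal(\cll))$ is well-formed, and it remains only to check the inclusion in the two remaining cases.

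If $\cll = \set{\lx}$ is a singleton, then $\alphal(\cll) = \lx$ and $\gammal(\lx) = \set{\lx} = \cll$, so the inclusion holds --- in fact with equality. Otherwise $\cll$ is neither empty nor a singleton, so $\alphal(\cll) = \ul$ and $\gammal(\ul) = \Label$; since every element of $\cll$ is by assumption a label, $\cll \subseteq \Label = \gammal(\alphal(\cll))$, which is what we need.

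There is no real obstacle here: the argument is immediate once the cases are laid out, as is typical for the soundness half of the Galois connection underlying AGT. The only point worth stating explicitly is that the two cases above are jointly exhaustive over nonempty subsets of $\Label$, which is precisely what makes $\alphal$ total on such sets and makes the ``otherwise'' clause $\alphal(\cll) = \ul$ apply exactly when $\cll$ contains at least two distinct labels.
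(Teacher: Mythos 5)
Your proof is correct and follows essentially the same route as the paper's: a case split on whether the nonempty set $\cll$ is a singleton (where concretization of the abstraction gives back exactly $\cll$) or not (where $\alphal(\cll) = \ul$ concretizes to all of $\Label$, which trivially contains $\cll$). Your extra remark about exhaustiveness of the cases is a fine, if optional, clarification.
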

\begin{proof}
By case analysis on the structure of $\cll$.  
If $\cll = \set{\lx}$ then $\gammal(\alphal(\set{\lx})) =
\gammal(\lx) = \set{\lx} = \cll$,
otherwise \\
${\gammal(\alphal(\cll)) = \gammal(\ul) = \Label \supseteq \cll}$.
\end{proof}

\begin{restatable}[$\alphal$ is Optimal]{proposition}{laiopt}
  \label{prop:lai-opt}
  If $\cll \subseteq \gammal(\clx)$ then 
  $\alphal(\cll) \sqsubseteq \clx$.
\end{restatable}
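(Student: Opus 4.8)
The plan is to prove this by a finite case analysis on the shape of $\cll$, using two facts: that $\gammal$ sends every static label to a singleton while $\gammal(\ul) = \Label$, and that in the precision order $\ul$ is the greatest (most imprecise) element, with each static label $\lx$ below only $\lx$ itself and $\ul$. I would first flag the degenerate case $\cll = \emptyset$, where $\alphal(\cll)$ is undefined: as in the companion soundness proposition, the intended reading carries the implicit side condition that $\alphal(\cll)$ is defined, i.e.\ that $\cll$ is nonempty, so I would state that hypothesis explicitly and then assume $\cll \neq \emptyset$ henceforth.

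Next I would split on whether $\cll$ is a singleton. If $\cll = \set{\lx}$, then $\alphal(\cll) = \lx$, so it suffices to show $\lx \sqsubseteq \clx$. From $\set{\lx} \subseteq \gammal(\clx)$ we get $\lx \in \gammal(\clx)$, and inspecting the two forms of $\clx$: when $\clx = \lx'$, $\gammal(\clx) = \set{\lx'}$ forces $\lx = \lx'$, so $\lx \sqsubseteq \clx$ by reflexivity; when $\clx = \ul$, $\lx \sqsubseteq \ul$ because $\ul$ is the top of the precision order.

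In the remaining case $\cll$ is nonempty but not a singleton, so it contains two distinct labels $\lx_1 \neq \lx_2$. Then $\alphal(\cll) = \ul$ by definition, and we must establish $\ul \sqsubseteq \clx$, i.e.\ $\clx = \ul$. Since $\set{\lx_1,\lx_2} \subseteq \gammal(\clx)$, the set $\gammal(\clx)$ is not a singleton; but the concretization of any static label is a singleton, so $\clx$ cannot be a static label, hence $\clx = \ul$ and $\ul \sqsubseteq \ul$ by reflexivity.

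I do not anticipate any genuine obstacle here — the argument is a short, exhaustive case split. The only point that needs a little care is the treatment of the empty set and the attendant definedness side condition on $\alphal$, together with making sure the non-singleton case really does exploit the fact that a static label's concretization cannot contain two distinct elements.
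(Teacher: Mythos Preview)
Your proof is correct. The only difference from the paper's argument is the variable you case-split on: you analyze the shape of $\cll$ (empty, singleton, larger), whereas the paper splits directly on $\clx$. Splitting on $\clx$ is marginally more economical---when $\clx = \ul$ the conclusion is immediate since $\ul$ is top in the precision order, and when $\clx = \lx$ the hypothesis $\cll \subseteq \gammal(\lx) = \set{\lx}$ forces $\cll$ to be either empty (undefined) or $\set{\lx}$, so no separate ``two distinct elements'' case is needed. Your route requires a secondary split on $\clx$ inside the singleton case and an extra argument in the non-singleton case, but it is equally valid and makes the definedness issue for $\alphal$ explicit up front, which is a nice touch.
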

\begin{proof}
By case analysis on the structure of $\clx$.  
If $\clx = \lx$, $\gammal(\clx)= \set{\lx}$;
$\cll \subseteq \set{\lx}, \cll \neq \emptyset$ implies $\alphal(\cll) =
\alphal(\set{\lx}) = \lx \sqsubseteq \clx$ (if $\cll = \emptyset$, 
$\alphal(\cll)$ is undefined).
If $\clx = \ul$, $\clx' \sqsubseteq \clx$ for all $\clx'$.
\end{proof}

Having defined the meaning of gradual labels, we define the meaning of
gradual security types via concretization.

\newsavebox{\grr}
\begin{definition}[Type Concretization]
\mbox{}
Let ${\gammas : \GType -> \Pow(\Type)}$ be defined as follows:
\begin{align*}
\gammas(\Bool_{\clx}) &= \set{\Bool_{\lx} | \lx \in \gammal(\clx)} \\
\gammas(\cS ->_{\clx} \cS) &=
\savebox{\grr}{$\gammas(\cS_1) ->_{\gammal(\clx)} \gammas(\cS_2)$}
\wideparen{\usebox{\grr}}
\intertext{where}
\savebox{\grr}{$\clS_1 ->_{\cll} \clS_2$}
\wideparen{\usebox{\grr}} &= 
\set{S_1 ->_{\lx} S_2 | S_i \in \gammas(\clS_i), \lx \in \cll}.
\end{align*}
\end{definition}

With concretization of security type, we can now define security type
precision.

\begin{definition}[Label and Type Precision]
  \mbox{}
  \begin{enumerate}
  \item $\clx_1$ is less imprecise than $\clx_2$, notation
    $\clx_1 \gprec \clx_2$, if and only if
    $\gammal(\clx_1) \subseteq \gammal(\clx_2)$; inductively:
\begin{mathpar} 
  \inference{}{\clx \gprec \?}
  \and
  \inference{}{\clx \gprec \clx}
  \and 
\end{mathpar}

  \item $\cS_1$ is less imprecise than $\cS_2$,
    notation $\cS_1 \sqsubseteq \cS_2$, if and only if
    $\gammas(\cS_1) \subseteq \gammas(\cS_2)$; inductively:
\begin{mathpar}
  \inference{
    \clx_1 \gprec \clx_2
  }{
    \Bool_{\clx_1} \gprec \Bool_{\clx_2}
  }
  \and
  \inference{
    \cS_{11} \gprec \cS_{21} & \cS_{12} \gprec \cS_{22} \\ \clx_1 \gprec \clx_2
  }{
    \cS_{11} ->_{\clx_1} \cS_{12} \gprec \cS_{21} ->_{\clx_2} \cS_{22}
  }
\end{mathpar}
  \end{enumerate}
\end{definition}

We now define the abstraction function.

\begin{definition}[Type Abstraction]
  Let the abstraction function 
  $\alphas : \Pow(\Type) -> \GType$
  be defined as:
\begin{align*}
  \alphas(\set{\overline{\Bool_{\lx_i}}}) &=
  \Bool_{\alphal(\set{\overline{\lx_i}})} \\[1mm]
  \alphas(\set{\overline{S_{i1} ->_{\lx_i} S_{i2}}}) &=
  \alphas(\set{\overline{S_{i1}}}) ->_{\alphal(\set{\overline{\lx_i}})} 
  \alphas(\set{\overline{S_{i2}}}) \\[1mm]
  \alphas(\clT) & \text{ is undefined otherwise}\\[1mm]
  \end{align*}
\end{definition}

We can only abstract {\em valid} sets of security types, \ie~in which
elements only defer by security labels.

\begin{definition}[Valid Type Sets]
\begin{mathpar} 
  \inference{}{\valid(\setof{\Bool_{\lx_i}})}

  \inference{\valid(\setof{S_{i1}}) & \valid(\setof{S_{i2}})}
  {\valid(\setof{(S_{i1} -> S_{i2})_{\lx_i}})}
  \end{mathpar}
\end{definition}

\begin{restatable}[$\alphas$ is Sound]{proposition}{saisound}
  \label{prop:sai-sound}
  \mbox{}\\
  If $\valid(\clS)$ then 
  $\clS \subseteq \gammas(\alphas(\clS)).$
\end{restatable}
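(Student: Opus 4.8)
The plan is to prove this by induction on the derivation of $\valid(\clS)$ --- equivalently, on the common top-level type structure shared by all elements of $\clS$. Before starting the induction I would observe that we may assume $\clS \neq \emptyset$: if $\clS$ were empty then $\alphas(\clS)$ is undefined and there is nothing to show, and moreover nonemptiness is exactly the hypothesis that soundness of $\alphal$ (Proposition~\ref{prop:lai-sound}) requires, so it is needed anyway.

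In the base case $\clS = \setof{\Bool_{\lx_i}}$, so $\alphas(\clS) = \Bool_{\alphal(\setof{\lx_i})}$ and, unfolding the $\Bool$ clause of $\gammas$, $\gammas(\alphas(\clS)) = \set{\Bool_\lx | \lx \in \gammal(\alphal(\setof{\lx_i}))}$. I would then reduce ``$\Bool_{\lx_i} \in \gammas(\alphas(\clS))$'' to ``$\lx_i \in \gammal(\alphal(\setof{\lx_i}))$'', i.e.\ to $\setof{\lx_i} \subseteq \gammal(\alphal(\setof{\lx_i}))$, which is literally Proposition~\ref{prop:lai-sound}. In the inductive case $\clS = \setof{(S_{i1} -> S_{i2})_{\lx_i}}$ with subderivations $\valid(\setof{S_{i1}})$ and $\valid(\setof{S_{i2}})$; here $\alphas(\clS) = \alphas(\setof{S_{i1}}) ->_{\alphal(\setof{\lx_i})} \alphas(\setof{S_{i2}})$, and unfolding the arrow clause of $\gammas$ I would need, for each $i$, that $S_{i1} \in \gammas(\alphas(\setof{S_{i1}}))$, $S_{i2} \in \gammas(\alphas(\setof{S_{i2}}))$, and $\lx_i \in \gammal(\alphal(\setof{\lx_i}))$. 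The first two are the induction hypotheses applied to the (nonempty, since $\clS$ is) valid component sets, and the third is again Proposition~\ref{prop:lai-sound}.

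The step I expect to need the most care is not the mathematics but the bookkeeping around the overlined set-builder notation: I would want to be explicit that the ``component'' sets $\setof{S_{i1}}$ and $\setof{S_{i2}}$ appearing in the $\valid$ and $\alphas$ rules really are the sets of first/second components of the arrows in $\clS$ --- which may collapse duplicates, and so be strictly smaller than $\clS$, but are never empty --- so that the induction hypothesis genuinely applies to them, and likewise that $\setof{\lx_i}$ is the set of their top-level labels. Once that correspondence is pinned down, the argument is just a matching of the two recursive clauses of $\gammas$ against the two clauses of $\alphas$, with $\alphal$-soundness discharging every label-level obligation.
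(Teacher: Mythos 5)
Your proof is correct and takes essentially the same route as the paper's: the paper phrases the induction as well-founded induction on $\clS$ under the ordering generated by the collecting domain/codomain operators rather than as induction on the derivation of $\valid(\clS)$, but these coincide, and the case split and the appeal to soundness of $\alphal$ in each case are identical. Your explicit attention to nonemptiness of $\clS$ and of the component sets is a detail the paper leaves implicit.
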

\begin{proof}
  By well-founded induction on $\clS$ according
  to the ordering relation $\clS \sqsubset \clS$ defined as follows:
  \begin{align*}
    \cldom(\clS) \sqsubset \clS\\
    \clcod(\clS) \sqsubset \clS
  \end{align*}
  Where $\cldom,\clcod : \Pow(\GType) -> \Pow(GType)$ are the collecting
  liftings of the domain and codomain functions $\dom,\cod$ respectively, \eg,
  \begin{equation*}
    \cldom(\clS) = \set{\dom(S) | S \in \clS}.
  \end{equation*}

  We then consider cases on $\clS$ according to the definition of $\alphas$.

\begin{case}[$\set{\overline{\Bool_{\lx_i}}}$]
\begin{align*}
\gammas(\alphas(\set{\overline{\Bool_{\lx_i}}})) &=  
\gammas(\Bool_{\alphal(\set{\overline{\lx_i}})}) \\
&= \set{\Bool_{\lx} | \lx \in \gammal(\alphal(\set{\overline{\lx_i}}))} \\
                       &\supseteq \set{\overline{\Bool_{\lx_i}}}
                       \text{ by soundness of $\alphal$.}
\end{align*}
\end{case}

\begin{case}[$\set{\overline{S_{i1} ->_{\lx_i} S_{i2}}}$]
\begin{align*} 
& \gammas(\alphas(\set{\overline{S_{i1} ->_{\lx_i} S_{i2}}})) \\
=\; &
\gammas(\alphas(\set{\overline{S_{i1}}}) ->_{\alphal(\set{\overline{\lx_i}})} 
  \alphas(\set{\overline{S_{i2}}})) \\
=\; &
\gammas(\alphas(\set{\overline{S_{i1}}}) ->_{\gammal(\alphal(\set{\overline{\lx_i}}))}
\gammas(\alphas(\set{\overline{S_{i2}}}))
\\
\supseteq\;& \set{\overline{S_{i1} ->_{\lx_i} S_{i2}}}
\end{align*}
by induction hypotheses on $\set{\overline{S_{i1}}}$ and
  $\set{\overline{S_{i2}}}$, and soundness of $\alphal$.
\end{case}
\end{proof}

\begin{restatable}[$\alphas$ is Optimal]{proposition}{saiopt}
  \label{prop:sai-opt}
  If $\valid(\clS)$ and $\clS \subseteq \gammas(\cS)$ then 
  $\alphas(\clS) \sqsubseteq \cS$.
\end{restatable}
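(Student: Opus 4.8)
The plan is to follow the pattern of the soundness proof above together with the optimality of $\alphal$ (Proposition~\ref{prop:lai-opt}), proceeding by structural induction on $\cS$: since $\gammas$ is defined by cases on the top-level constructor of $\cS$, the hypothesis $\clS \subseteq \gammas(\cS)$ pins down the shape of every element of $\clS$, and the validity hypothesis guarantees that $\alphas(\clS)$ is in fact defined. As in the label case, when $\clS$ is empty $\alphas(\clS)$ is undefined (just like $\alphal(\emptyset)$), so the statement concerns the nonempty case; I leave this implicit below.

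\emph{Case $\cS = \Bool_{\clx}$.} Then $\gammas(\cS) = \set{\Bool_{\lx} | \lx \in \gammal(\clx)}$, so every element of $\clS$ is of the form $\Bool_{\lx}$ with $\lx \in \gammal(\clx)$; hence $\clS = \setof{\Bool_{\lx_i}}$ for some nonempty family with $\setof{\lx_i} \subseteq \gammal(\clx)$. By $\alphal$-optimality (Proposition~\ref{prop:lai-opt}) we get $\alphal(\setof{\lx_i}) \sqsubseteq \clx$, and therefore $\alphas(\clS) = \Bool_{\alphal(\setof{\lx_i})} \sqsubseteq \Bool_{\clx} = \cS$ by the inductive rule for precision of $\Bool$ types.

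\emph{Case $\cS = \cS_1 ->_{\clx} \cS_2$.} Then $\gammas(\cS) = \set{S_1 ->_{\lx} S_2 | S_i \in \gammas(\cS_i),\; \lx \in \gammal(\clx)}$, so, using $\valid(\clS)$, we have $\clS = \setof{S_{i1} ->_{\lx_i} S_{i2}}$ with each $S_{i1} \in \gammas(\cS_1)$, each $S_{i2} \in \gammas(\cS_2)$, and each $\lx_i \in \gammal(\clx)$. Then $\cldom(\clS) = \setof{S_{i1}} \subseteq \gammas(\cS_1)$ with $\valid(\cldom(\clS))$ (by the definition of $\valid$), and symmetrically $\clcod(\clS) = \setof{S_{i2}} \subseteq \gammas(\cS_2)$ with $\valid(\clcod(\clS))$. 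The induction hypotheses applied at $\cS_1$ and $\cS_2$ give $\alphas(\setof{S_{i1}}) \sqsubseteq \cS_1$ and $\alphas(\setof{S_{i2}}) \sqsubseteq \cS_2$, while $\alphal$-optimality gives $\alphal(\setof{\lx_i}) \sqsubseteq \clx$. Feeding these three facts into the inductive rule for precision of function types, $\alphas(\clS) = \alphas(\setof{S_{i1}}) ->_{\alphal(\setof{\lx_i})} \alphas(\setof{S_{i2}}) \sqsubseteq \cS_1 ->_{\clx} \cS_2 = \cS$.

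\emph{Expected obstacle.} There is no deep difficulty here; the whole proof is the routine bookkeeping of the arrow case, where one must extract from $\clS \subseteq \gammas(\cS)$ (together with $\valid(\clS)$) that $\clS$ is a set of arrow types whose domain and codomain sets are themselves valid and contained in $\gammas(\cS_1)$ and $\gammas(\cS_2)$, so that $\cldom$ and $\clcod$ are well-behaved and $\alphas$ is defined on them, and then verify that the inductive characterisation of type precision matches exactly the three premises produced by the induction hypotheses and $\alphal$-optimality. The only mild subtlety is the degenerate empty-set case, which is handled by the same convention as for $\alphal$.
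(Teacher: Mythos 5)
Your proposal is correct and follows essentially the same route as the paper's proof: structural induction on $\cS$, splitting into the $\Bool_{\clx}$ and $\cS_1 ->_{\clx} \cS_2$ cases, and combining the induction hypotheses with the optimality of $\alphal$ via the inductive precision rules. Your extra remarks about validity of $\cldom(\clS)$ and $\clcod(\clS)$ and the empty-set convention are reasonable bookkeeping that the paper leaves implicit.
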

\begin{proof}
By induction on the structure of $\cS$.
\begin{case}[$\Bool_{\clx}$] 
  $\gammas(\Bool_\clx) = \set{\Bool_{\lx} | \lx \in {\gammal(\clx)}}$\\
  So $\clS = \set{\Bool_{\lx} | \lx \in \cll}$ for some
  $\cll \subseteq \gammal(\clx)$. By optimality of $\alphal$,
  $\alphal(\cll) \gprec \clx$, so
  $\alphas(\set{\Bool_{\lx} | \lx \in \cll}) = \Bool_{\alphal(\cll)}
  \gprec \Bool_\clx$.
\end{case}

\begin{case}[$\cS_1 ->_{\clx} \cS_2$]
  $\gammas(\cS_1 ->_\clx \cS_2) =
  \savebox{\grr}{$\gammas(\cS_1) ->_{\gammal(\clx)} \gammas(\cS_2)$}
  \wideparen{\usebox{\grr}}.$
  \\
  So 
  $\clS = \set{\overline{S_{1i} ->_{l_i} S_{2i}}}$,
  with $\set{\overline{S_{1i}}} \subseteq \gammas(\cS_1)$,\\
  $\set{\overline{S_{1i}}} \subseteq \gammas(\cS_2)$, and
  $\set{\overline{l_i}} \subseteq \gammal(\clx)$.
  By induction hypotheses,
  $\alphas(\set{\overline{S_{1i}}}) \gprec \cS_1$ and 
  $\alphas(\set{\overline{S_{2i}}}) \gprec \cS_2$, and by
  optimality of $\alphal$, $\alphal(\set{\overline{l_i}}) \gprec \clx$.
  Hence
  $\alphas(\set{\overline{S_{1i} ->_{l_i} S_{2i}}})
  =\\
  \alphas(\set{\overline{S_{1i}}}) ->_{\alphal(\set{\overline{l_i}})}
  \alphas(\set{\overline{S_{2i}}})
  \gprec \cS_1 ->_\clx \cS_2$.
\end{case}

\end{proof}

\subsection{Consistent Predicates and Operators}

Following the AGT approach, we lift predicates on labels and types to
\emph{consistent} predicates on the corresponding gradual labels and gradual
types.  Consistent predicates hold if \emph{some} member of the collecting
semantics satisfies the corresponding static predicate.  We lift partial
functions to gradual partial functions, as per the standard approach in
abstract interpretation.

\begin{definition}[Consistent label ordering]
  $\clx_1 \csubl \clx_2$ if and only if $\lx_1 \subl \lx_2$ 
  for some $(\lx_1, \lx_2) \in \gammal(\clx_1) \times \gammal(\clx_2).$
\\
\noindent Algorithmically:
 \begin{mathpar}
   \inference[]{}{\? \csubl \clx}
   \and
   \inference[]{}{\clx \csubl \?}
   \and
   \inference[]{\lx_1 \subl \lx_2}{\lx_1 \csubl \lx_2}
 \end{mathpar}
\end{definition}

 \begin{definition}[Gradual label join]
$\\ \clx_1 \cjoin \clx_2 = \alphal(\set{ \lx_1 \ljoin \lx_2 | (\lx_1,
  \lx_2) \in \gammal(\clx_1) \times \gammal(\clx_2) }).$
\\
\noindent Algorithmically:
\begin{align*}
\top \cjoin \ul &= \ul \cjoin \top = \top\\
\clx \cjoin \ul &= \ul \cjoin \clx = \ul \text{  if } \clx \neq \top\\
\lx_1 \cjoin \lx_2 &= \lx_1 \ljoin \lx_2
\end{align*}
\end{definition}

Both gradual label stamping and gradual join of security types
are obtained by lifting their corresponding static operations:

\begin{definition}[Gradual label meet]
$\\ \clx_1 \cmeet \clx_2 = \alphal(\set{ \lx_1 \lmeet \lx_2 | (\lx_1,
  \lx_2) \in \gammal(\clx_1) \times \gammal(\clx_2) })$.
\end{definition}
\noindent Algorithmically:
\begin{align*}
\bot \cmeet \ul &= \ul \cmeet \bot = \bot\\
\clx \cmeet \ul &= \ul \cmeet \clx = \ul \text{  if } \clx \neq \bot\\
\lx_1 \cmeet \lx_2 &= \lx_1 \lmeet \lx_2
\end{align*}

We now lift subtyping to gradual security types.
\begin{definition}[Consistent subtyping]
$  \cS_1 \csub \cS_2  \text{ if and only if} \\
    \qquad S_1 \sub S_2 \text{ for some } (S_1, S_2) \in
    \gammas(\cS_1) \times \gammas(\cS_2)$
  \end{definition}

\begin{figure}[h]

  \begin{small}
  \begin{displaymath}
    \begin{array}{rcll}
      \multicolumn{4}{c}{
        \clx \in \GLabel,\quad
        \cS \in \GType,\quad
        x \in \oblset{Var},\quad
        b \in \oblset{Bool},\quad
        \oplus \in \oblset{BoolOp}
      }\\
      \multicolumn{4}{c}{
        \ct \in \GTerm,\quad
      
        r \in \oblset{RawValue}\quad      
        v \in \oblset{Value}\quad
        \Gamma \in \oblset{Var} \finto \GType
      } \\[3mm]
      \cS & ::= & \Bool_\clx | \cS ->_\clx \cS & \text{(gradual types)} \\
      b & ::= & \ttt | \ff & \text{(Booleans)}\\
      r & ::= & b | \lambda x:\cS.\ct & \text{(base values)}\\
      v & ::= & r_\lx & \text{(values)}\\
      \ct & ::= & v |  \ct\;\ct | \ct \oplus \ct | \ite{\ct}{\ct}{\ct}
      & \text{(terms)} \\
      \oplus & ::= & \land | \lor | \implies & \text{(operations)} \\
    \end{array}
  \end{displaymath}

\framebox{$\Gamma |- t : S$}
 \begin{mathpar}
    \inference[($\cS$x)]{x:\cS\in\Gamma}{\Gamma |- x : \cS}
    \and
    \inference[($\cS$b)]{}{\Gamma |- b_\clx : \Bool_\clx}
    \and
    \inference[($\cS\lambda$)]{
      \Gamma, x:\cS_1 |- \ct : \cS_2
    }{
      \Gamma |- (\lambda x : \cS_1.\ct)_\clx : \cS_1 ->_\clx \cS_2
    }
    \and
    \inference[($\cS\oplus$)]{
      \Gamma |- \ct_1 : \Bool_{\clx_1} & 
      \Gamma |- \ct_2 : \Bool_{\clx_2} &
    }{
      \Gamma |- \ct_1 \oplus \ct_2 : \Bool_{\clx_1 \cljoin \clx_2}
    }
    \and
    \inference[($\cS$app)]{
      \Gamma |- \ct_1 : \cS_{11} ->_\clx \cS_{12} &
      \Gamma |- \ct_2 : \cS_2 & \cS_2 \csub \cS_{11}
    }{
      \Gamma |- \ct_1\;\ct_2 : \cS_{12} \cljoin \clx
    }
    \and
    \inference[($\cS$if)]{
      \Gamma |- \ct : \Bool_\clx &
      \Gamma |- \ct_1 : \cS_1 & \Gamma |- \ct_2 : \cS_2
    }{
      \Gamma |- \ite{\ct}{\ct_1}{\ct_2} : (\cS_1 \cssubjoin \cS_2) \cljoin \clx
    }
    \and
    \inference[($\cS$::)]{
      \Gamma |- t : \cS_1 &
      \cS_1 \csub \cS_2
    }{
      \Gamma |- t :: \cS_2 : \cS_2
    }
  \end{mathpar}

\framebox{$\cS \csub \cS$}
\begin{mathpar} 
  \inference {\clx \csubl \clx'}{\Bool_\clx \csub \Bool_{\clx'}}
  \and
  \inference
  { \cS'_1 \csub \cS_1 & \cS_2 \csub \cS'_2 & \clx \csubl \clx'}
  { \cS_1 ->_\clx \cS_2 \csub \cS'_1 ->_{\clx'} \cS'_2}
\end{mathpar}

\framebox{$S \cssubjoin S$, $S \cssubmeet S$}
\begin{equation*}
  \begin{block}
    \cssubjoin : \GType \times \GType \rightharpoonup \GType \\[0.2em]
    \Bool_\clx \cssubjoin \Bool_{\clx'} = \Bool_{(\clx \cljoin \clx')} \\
    (\cS_{11} ->_\clx \cS_{12}) \cssubjoin (\cS_{21} ->_{\clx'} \cS_{22}) = 
    (\cS_{11} \cssubmeet \cS_{21}) ->_{(\clx \cljoin \clx')} (\cS_{12} \cssubjoin \cS_{22}) \\
    \cS \cssubjoin \cS \text{ undefined otherwise} \\
    \\
    \cssubmeet : \GType \times \GType \rightharpoonup \GType \\[0.2em]
    \Bool_\clx \cssubmeet \Bool_{\clx'} = \Bool_{(\clx \lmeet \clx')} \\
    (\cS_{11} ->_{\clx} \cS_{12}) \cssubmeet (\cS_{21} ->_{\clx'} \cS_{22}) = 
    (\cS_{11} \cssubjoin \cS_{21}) ->_{(\clx \clmeet \clx')} (\cS_{12} \cssubmeet \cS_{22}) \\
    \cS \cssubmeet \cS \text{ undefined otherwise} \\
  \end{block}
\end{equation*}

\end{small}
 \caption{\lgsec: Syntax and Static Semantics}
  \label{fig:lgsec}
\end{figure}

\subsection{Gradual Security Type System}
The gradual security type system is adapted from Figure~\ref{fig:lsec}
by lifting static types and labels to gradual types and labels, 
lifting partial functions on static types to partial functions on gradual
types, and lifting predicates on types and labels to consistent predicates on
gradual types and labels.

The AGT approach yields a gradual counterpart to an underlying static type
system that satisfies a number of desirable properties.  To state these
properties, the following propositions use $|-_S$ to denote the \lsec typing
relation of Figure~\ref{fig:lsec}.

\begin{restatable}[Conservative Extension]{proposition}{sgequivs}
  \label{prop:sgequivs}
  \mbox{}
  For ${t \in \Term}$, \\
  $|-_S t : S$ if and only if ~$|- t : S$
\end{restatable}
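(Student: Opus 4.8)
The plan is to prove the two implications separately, each by induction on a typing derivation, with the whole argument hinging on a single observation: restricted to \emph{static} labels and types, every meta-operation used by the \lgsec type system collapses onto the corresponding \lsec meta-operation. First I would record the trivial embeddings $\Label \subseteq \GLabel$ and $\Type \subseteq \GType$ (a static label is a gradual label that is not $\ul$, and a static type is a gradual type all of whose labels are static), and note that a term $t \in \Term$ likewise denotes a term in the common syntax, so that the judgments $\vdash_S t : S$ and $\vdash t : S$ talk about the same $t$ and the same $S$.

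The key lemma I would establish is: for all $\lx,\lx' \in \Label$ and $S,S' \in \Type$, (i) $\lx \csubl \lx'$ iff $\lx \subl \lx'$, and $\lx \cljoin \lx' = \lx \ljoin \lx'$ and $\lx \clmeet \lx' = \lx \lmeet \lx'$; (ii) $S \csub S'$ iff $S \sub S'$; and (iii) $S \cssubjoin S'$ is defined iff $S \subjoin S'$ is, and then they are equal, and likewise for $\cssubmeet$ versus $\submeet$, and gradual label stamping agrees with static label stamping on static arguments. Part (i) is immediate from $\gammal(\lx)=\set{\lx}$: each defining existential over $\gammal(\clx_1)\times\gammal(\clx_2)$ ranges over a single pair, and $\alphal$ of a singleton returns that element. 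Part (ii) follows once one checks, by a one-line induction on $S$, that $\gammas(S)=\set{S}$ for static $S$; then the existential in the definition of $\csub$ again ranges over the single pair $(S,S')$. Part (iii) is a structural induction on $S$ and $S'$ that uses (i) at the label positions, since $\cssubjoin$ and $\cssubmeet$ are defined by exactly the same case split as $\subjoin$ and $\submeet$, differing only in that $\ljoin,\lmeet$ are replaced by $\cljoin,\clmeet$ (which by (i) agree on static inputs), and both are ``undefined otherwise''.

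With the lemma in hand, the forward direction ($\vdash_S t : S$ implies $\vdash t : S$) is a routine induction on the \lsec derivation: each rule $(\mathrm{S}x)$, $(\mathrm{S}b)$, $(\mathrm{S}\lambda)$, $(\mathrm{S}\oplus)$, $(\mathrm{S}\mathrm{app})$, $(\mathrm{S}\mathrm{if})$, $(\mathrm{S}{::})$ has an \lgsec counterpart of identical shape whose side conditions ($\csubl$, $\csub$, and definedness of $\cssubjoin$/$\cssubmeet$) and whose result type ($\cljoin$, $\cssubjoin$, and stamping) coincide with the static ones by the lemma, so the induction hypotheses supply precisely what the gradual rule requires. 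The backward direction ($\vdash t : S$ implies $\vdash_S t : S$) is again an induction on the \lgsec derivation, and this is where I expect the only real — though modest — work: one must argue that typing a static term in \lgsec can only ever synthesize a \emph{static} type. The type annotations appearing in $t \in \Term$ are static by hypothesis, and by the lemma none of $\cljoin$, $\clmeet$, $\cssubjoin$, $\cssubmeet$, nor the stamping operation, produces a label or type containing $\ul$ when fed static inputs; hence ``staticness'' of the type propagates up the derivation. Once that invariant is in place, each \lgsec rule instance is, by the lemma, literally an instance of the matching \lsec rule, completing the induction.
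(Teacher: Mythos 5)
Your proposal is correct and follows essentially the same route as the paper, which proves the proposition by induction on the typing derivations and notes that the argument is trivial precisely because static types have singleton concretizations --- the fact your key lemma makes explicit. Your additional observation for the backward direction (that the gradual system can only synthesize static types from static terms) is a detail the paper leaves implicit but is the right thing to check.
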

\begin{proof}
  By induction over the typing derivations.  The proof is trivial because
  static types are given singleton meanings via concretization.
\end{proof}

In the following proposition, precision on terms $\ct_1 \gprec \ct_2$ is the
natural lifting of type precision to terms.

\begin{restatable}[Static gradual guarantee]{proposition}{sgradone}
  \label{prop:sgrad-i}
If $|- \ct_1 : \cS_1$ and $\ct_1 \gprec \ct_2$, then $|- \ct_2 : \cS_2$ and $\cS_1 \gprec \cS_2$.
\end{restatable}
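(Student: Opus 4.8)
The plan is to proceed by induction on the derivation of $|- \ct_1 : \cS_1$, with a simultaneous inner induction (or case analysis) following the structure of $\ct_1 \gprec \ct_2$. Since term precision $\ct_1 \gprec \ct_2$ is the natural lifting of type precision, the two terms have the same syntactic shape and differ only in the gradual labels (and ascribed types) decorating them; in particular every typing rule that applies to $\ct_1$ has a matching instance that applies to $\ct_2$. So in each case I take the last rule used to type $\ct_1$, apply the induction hypothesis to the immediate subterms (which are themselves related by $\gprec$), obtain more-imprecise types for them, and then re-assemble the derivation for $\ct_2$. The remaining obligation in each case is to check two things: (i) that the side conditions of the rule — the consistent subtyping premises $\cS_2 \csub \cS_{11}$ in ($\cS$app) and ($\cS$::) — are still satisfied when the types get more imprecise; and (ii) that the type operators appearing in the conclusion ($\cljoin$, $\cssubjoin$, $\cssubmeet$, label stamping) are monotone with respect to $\gprec$, so that the resulting output type $\cS_2$ satisfies $\cS_1 \gprec \cS_2$ and, crucially, that these partial operations remain \emph{defined} on the more-imprecise inputs.

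The two supporting monotonicity facts are the technical core, and I would isolate them as auxiliary lemmas proved before the main induction. First, consistent subtyping is monotone in precision: if $\cS_1 \csub \cS_2$, $\cS_1 \gprec \cS_1'$, and $\cS_2 \gprec \cS_2'$, then $\cS_1' \csub \cS_2'$. This follows directly from the concretization-based definition of $\csub$ together with the definition of $\gprec$ as containment of concretizations — a witnessing pair in $\gammas(\cS_1)\times\gammas(\cS_2)$ is still a witnessing pair in the larger sets $\gammas(\cS_1')\times\gammas(\cS_2')$. Second, each of $\cljoin$, $\cmeet$, $\cssubjoin$, $\cssubmeet$, and label stamping is monotone in precision and, moreover, its domain of definedness is upward-closed under $\gprec$: if the operator is defined on the original inputs it is defined on any more-imprecise inputs, and the result is more imprecise. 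For the label operators this is immediate from their definition via $\alphal$ and the collecting lifting (a larger input set of label pairs yields a larger output set, and $\alphal$ is monotone). For $\cssubjoin$ and $\cssubmeet$, note these are only partial because they are undefined across a $\Bool$ / arrow mismatch; but $\ct_1 \gprec \ct_2$ forces the related subterms to keep the same top-level type constructor, so no mismatch can be introduced, and the result follows by a straightforward induction on types using the label-operator monotonicity at the leaves.

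With these lemmas in hand the main induction is routine. For ($\cS$x), ($\cS$b) the claim is immediate or follows from label monotonicity. For ($\cS\lambda$) and ($\cS\oplus$) apply the IH to the body / operands and use monotonicity of $\cljoin$ and of the arrow constructor under $\gprec$. For ($\cS$app): the IH gives $|- \ct_1' : \cS_{11}' ->_{\clx'} \cS_{12}'$ with $\cS_{11}->_{\clx}\cS_{12} \gprec \cS_{11}'->_{\clx'}\cS_{12}'$, hence $\cS_{11}\gprec\cS_{11}'$, $\cS_{12}\gprec\cS_{12}'$, $\clx\gprec\clx'$, and $|- \ct_2' : \cS_2'$ with $\cS_2\gprec\cS_2'$; the premise $\cS_2'\csub\cS_{11}'$ follows from monotonicity of $\csub$, and the output type $\cS_{12}'\cljoin\clx'$ is defined and satisfies $\cS_{12}\cljoin\clx \gprec \cS_{12}'\cljoin\clx'$ by monotonicity of stamping. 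The ($\cS$if) and ($\cS$::) cases are analogous, the former using monotonicity of $\cssubjoin$. The main obstacle is really the second auxiliary lemma — specifically making sure the \emph{definedness} of the partial meet/join operators is preserved, which is where one must lean on the fact that $\gprec$-related terms cannot change a type's outermost constructor; everything else is bookkeeping.
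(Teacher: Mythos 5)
Your proposal is correct and follows essentially the same route as the paper's proof, which is exactly an induction on the typing derivation (generalized to open terms) using the definition of term precision $\ct_1 \gprec \ct_2$. You additionally make explicit the supporting monotonicity lemmas for $\csub$ and for the partial operators $\cljoin$, $\cssubjoin$, $\cssubmeet$ that the paper's one-line proof sketch leaves implicit; these are stated and justified correctly.
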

\begin{proof}
  Corollary of the corresponding proposition for open terms.  By induction on
  typing derivation of $\Gamma |- \ct_1 : \cS_1$ using the definition of
  $\ct_1 \gprec \ct_2$.
\end{proof}

\subsection{Dynamic Semantics of Gradual Security Typing}

\paragraph{Interiors of consistent subtyping and label ordering.}

The \emph{interior} of a consistent judgment expresses the most precise
deducible information about a consistent judgment.  We define the interior of a
judgment in terms of our abstraction.

\begin{definition}[Interior]
  Let $P$ be a binary predicate on static types.
  Then the \emph{interior} of the judgment
  $\cP(\cT_1,\cT_2)$, notation $\interior{P}(\cT_1,\cT_2)$, is the smallest
  tuple $\evpr{\cT'_1,\cT'_2} \sqsubseteq^2 \evpr{\cT_1,\cT_2}$
  such that for $\evpr{T_1,T_2} \in \Type^2$, if
  $\evpr{T_1,T_2} \in \gamma^2(\cT_1,\cT_2)$ and $P(T_1,T_2)$, then
  $\evpr{T_1,T_2} \in \gamma^2(\cT'_1,\cT'_2)$.  

  It is formalized as follows:
  \begin{gather*}
    \interior{P}(\cT_1,\cT_2) = 
    \alpha^2(\set{\braket{T_1,T_2} \in 
      \gamma^2(\cT_1,\cT_2) | P(T_1,T_2)}).
  \end{gather*}
\end{definition}

We use case-based analysis to calculate the algorithmic rules for the interior
of consistent subtyping on gradual security types:
\begin{mathpar}
  \inference{
    \interior{\subl}(\clx_1, \clx_2) = \braket{\clx'_1, \clx'_2}
  }{
    \Isub(\Bool_{\clx_1}, \Bool_{\clx_2}) =
    \braket{\Bool_{\clx'_1}, \Bool_{\clx'_2}}
  }
  \and
  \inference{
    \Isub(\cS_{21},\cS_{11}) = \braket{\cS'_{21},\cS'_{11}} &
    \Isub(\cS_{12},\cS_{22}) = \braket{\cS'_{12},\cS'_{22}} \\
    \interior{\subl}(\clx_1, \clx_2) = \braket{\clx'_1, \clx'_2}
  }{
    \begin{block}
      \Isub(\cS_{11} ->_{\clx_1} \cS_{12},\cS_{21} ->_{\clx_2} \cS_{22})  = 
      \braket{\cS'_{11} ->_{\clx'_1} \cS'_{12},
        \cS'_{21} ->_{\clx'_2} \cS'_{22} }
    \end{block}
  }
\end{mathpar}

The rules appeal to the algorithmic rules for the interior of consistent label
ordering, calculated similarly:
\begin{mathpar}
  \inference{\lx \neq \top}{\interior{\subl}(\lx, \?) = \braket{\lx, \?}}
  \and
  \inference{}{\interior{\subl}(\top, \?) = \braket{\top, \top}}
  \and
  \inference{\lx \neq \bot}{\interior{\subl}(\?, \lx) = \braket{\?, \lx}}
  \and
  \inference{}{\interior{\subl}(\?, \bot) = \braket{\bot, \bot}}
  \and
  \inference{}{\interior{\subl}(\clx, \clx) = \braket{\clx, \clx}}
\end{mathpar}

\paragraph{Intrinsic terms.} Fig.~\ref{fig:lgsec-intrinsic} presents the
intrinsic terms for \lgsec.  Note that we do not need to introduce term
stamping in this language. Since terms are intrinsically typed and we have
ascriptions, labels can be stamped at the type level.

\begin{figure}[t]
\begin{small}
\begin{mathpar}
  \inference{}{x^{\cS} \in \TermT{\cS}}
  \and
  \inference{}{b_\clx \in \TermT{\Bool_\clx}}
  \and
  \inference{t^{\cS_2} \in \TermT{\cS_2}
  }{
    (\lambda x^{\cS_1}.t^{\cS_2})_\clx \in \TermT{\cS_1 ->_{\clx} \cS_2}
  }
  \and
  \inference{
    t^{\cS_1} \in \TermT{\cS_1} &     
    \ev_1 |- \cS_1 \csub \Bool_{\clx_1} \\ 
    t^{\cS_2} \in \TermT{\cS_2} &     
    \ev_2 |- \cS_2 \csub \Bool_{\clx_2} 
  }{
   \cast{\ev_1}{t^{\cS_1}} \oplus^{\clx_1 \cjoin \clx_2}
   \cast{\ev_2}{t^{\cS_2}} \in
   \TermT{\Bool_{\clx_1 \cjoin \clx_2 }}
  }
  \and
  \inference{
    t^{\cS_1} \in \TermT{\cS_1} & 
    \ev_1 |- \cS_1 \csub \cS_{11} ->_{\clx} \cS_{12}\\ 
   t^{\cS_2} \in \TermT{\cS_2} & 
   \ev_2 |- \cS_2 \csub \cS_{11}
  }{
    \cast{\ev_1}{\itm{1}} \iapp{\cS_{11} ->_{\clx} \cS_{12}}
    \cast{\ev_2}{\itm{2}}
    \in \TermT{\cS_{12} \cljoin \clx}
  }
  \and
  \inference{
   t^{\cS_1} \in \TermT{\cS_1} & 
   \ev_1 |- \cS_1 \csub \Bool_{\clx_1} \\
   t^{\cS_2} \in \TermT{\cS_2} &    
   \ev_2 |- \cS_2 \csub (\cS_2 \cjoin \cS_3) \cjoin \clx_1\\
   t^{\cS_3} \in \TermT{\cS_3} &   
   \ev_3 |- \cS_3 \csub (\cS_2 \cjoin \cS_3) \cjoin \clx_1\\
  }{
    \ite{\cast{\ev_1}{t^{\cS_1}}}{\cast{\ev_2}{t^{\cS_2}}}
      {\cast{\ev_3}{t^{\cS_3}}} 
    \in \TermT{(\cS_2 \cssubjoin \cS_3) \cjoin \clx_1}
  }
  \and
  \inference{
    t^{\cS_1} \in \TermT{\cS_1} &
    \ev_1 |- \cS_1 \csub \cS_2 \\
  }{
    \cast{\ev_1}{t^{\cS_1}} :: \cS_2 \in \TermT{\cS_2}
  } 
\end{mathpar}
\end{small}
\caption{\lgsec: Gradual Intrinsic Terms}
\label{fig:lgsec-intrinsic}
\end{figure}

\paragraph{Reduction.}

Evaluation uses the \emph{consistent transitivity} operator $\trans{\sub}$ to
combine evidences:
\begin{align*}
\braket{\cS_1, \cS_{21}} \trans{<:} \braket{\cS_{22}, \cS_3} & = 
\merge{<:}(\cS_1, \cS_{21} \meet \cS_{22}, \cS_3)
\end{align*}

First we calculate a recursive meet operator for gradual types:
\begin{align*}
\Bool_{\clx} \meet \Bool_{\clx'} &= \Bool_{\clx \meet \clx'}\\
(\cS_{11} ->_{\clx} \cS_{12}) \meet (\cS_{21} ->_{\clx'} \cS_{22}) &=
(\cS_{11} \meet \cS_{21}) ->_{\clx \meet \clx'} (\cS_{12} \meet \cS_{22})\\
\cS \meet \cS' & \text{ undefined otherwise}
\end{align*}
where $\clx \meet \clx' = \alphal(\gammal(\clx) \setint \gammal(\clx'))$, or
algorithmically:
\begin{align*}
  \clx \meet \? &= \? \meet \clx = \clx\\
  \lx \meet \lx &= \lx \\
  \lx \meet \lx' & \text{ undefined otherwise}
\end{align*}

We calculate a recursive definition for $\merge{\sub}$ by case analysis
on the structure of the second argument, 
\begin{mathpar}
  \inference{
    \merge{\subl}(\clx_1,\clx_2, \clx_3) = \braket{\clx_1,\clx_3}
  }{
    \merge{\sub}(\Bool_{\clx_1}, \Bool_{\clx_2}, \Bool_{\clx_3}) =
    \braket{\Bool_{\clx_1},\Bool_{\clx_3}}
  }
  \and
  \inference{
    \merge{\sub}(\cS_{31},\cS_{21},\cS_{11}) = \braket{\cS_{31},\cS_{11}} \\
    \merge{\sub}(\cS_{12},\cS_{22},\cS_{32}) = \braket{\cS_{12},\cS_{32}} \\
    \merge{\subl}(\clx_1,\clx_2, \clx_3) = \braket{\clx_1,\clx_3}}{
    \begin{block}
      \merge{\sub}(\cS_{11} ->_{\clx_1} \cS_{12},
      \cS_{21} ->_{\clx_2} \cS_{22}, \cS_{31} ->_{\clx_3} \cS_{32}) \\[0.5em]=
      \braket{\cS_{11} ->_{\clx_1} \cS_{12},
        \cS_{31} ->_{\clx_3} \cS_{32}}
    \end{block}
  }
\end{mathpar}
with the following definition of $\merge{\subl}$, again calculated by
case analysis on the middle gradual label:
\begin{mathpar}
\inference{\top \csubl \clx_3}{
  \merge{\subl}(\clx_1, \top, \clx_3) = \braket{\clx_1, \top}}
\and
\inference{\clx_1 \csubl \bot}{
  \merge{\subl}(\clx_1, \bot, \clx_3) = \braket{\bot, \clx_3}}
\and
\inference{\clx_1 \csubl \clx_2 & \clx_2 \csubl \clx_3 & \clx_2
  \not\in \set{\bot,\top}}{
  \merge{\subl}(\clx_1,\clx_2, \clx_3) =
  \braket{\clx_1, \clx_3}}
\end{mathpar}

\begin{figure}[t]
\begin{small}
\begin{equation*}
  \begin{array}{rcl}
    \multicolumn{3}{l}{\ev \in \Evidence, \quad
      \evt \in \EvTerm,\quad
      \evv \in \EvValue,\quad
      u \in \SimpleValue,}

    \\
    \multicolumn{3}{l}{
      t \in \TermT{*},\quad
      v \in \Value,\quad
      g \in \EvFrame,\quad
      f \in \TmFrame
    } \\
    \ev & ::= & \pr{\cS,\cS}\\
    \evt & ::= & \cast{\ev}{t} \\
    \evv & ::= & \cast{\ev}{u} \\
    u & ::= & x | b_\clx | (\lambda x.t)_\clx\\[0.3em]
    v & ::= & u | \cast{\ev}{u} :: \cS \\
    g & ::= & [] \oplus^{\clx} \evt | \evv \oplus^{\clx} [] 
    | []\iapp{\cS}\evt | \evv\iapp{\cS}[]
    | []::\cS \\
    & | & \ite{[]}{\evt}{\evt} \\
    f & ::= & g[ \cast{\ev}{[]} ]
  \end{array}
\end{equation*}

\vspace{0.5em}

\noindent\textbf{Notions of Reduction}\\[0.5em]
\boxed{
  \begin{block}
    --> : \TermT{\cS} \times (\TermT{\cS} \union \set{\error}) \\[0.5em]
  \end{block}
}
\begin{equation*}
\begin{block}
  \cast{\pr{\Bool_{\clx'_1},\Bool_{\clx''_1}}}{(b_1)_{\clx_1}} \oplus^{\clx}
  \cast{\pr{\Bool_{\clx'_2},\Bool_{\clx''_2}}}{(b_2)_{\clx_2}} -->\\[0.5em]
\hspace{0.8cm} 
\cast{\pr{\Bool_{(\clx'_1 \cjoin \clx'_2)},
    \Bool_{(\clx''_1 \cjoin \clx''_2)}}}
       {(b_1 \;\llbracket \oplus \rrbracket\; b_2)}_{(\clx_1
         \cjoin \clx_2)} :: \Bool_\clx
  \\[1.2em]
  \evcast{\ev_1}{(\lambda x^{\cS_{11}}.t^{*})_{\clx_1}}
   \iapp{\cS_1 ->_\clx \cS_2}
  \evcast{\ev_2}{u}
  -->
  \\[0.3em]
\hspace{0.5cm}\begin{cases}
  \evcast{\invcod(\ev_1)}{
    (  [((\ev_2 \trans{<:} \invdom(\ev_1)){u} :: 
    \cS_{11})/x^{\cS_1}] t^{*}) :: \cS_{2} \cjoin \clx} \\
  \error \qquad \text{if }(\ev_2 \trans{<:} \invdom(\ev_1)) \text{ not defined}
  \end{cases}
  \\[1.2em]

 \<if>   \cast{\ev_1}{b_{\clx}} 
  \<then> \cast{\ev_2}{t^{\cS_2}}
  \<else> \cast{\ev_3}{t^{\cS_3}}
  --> \\[0.2em]
  \hspace{3cm}
  \begin{cases}
    \cast{\ev_2}{t^{\cS_2}} 
    :: (\cS_2 \cssubjoin \cS_3) \cjoin \clx
    & b = \mathsf{true} \\
    \cast{\ev_3}{t^{\cS_3}} 
    :: (\cS_2 \cssubjoin \cS_3) \cjoin \clx
    & b = \mathsf{false}
  \end{cases}
\end{block}
\end{equation*}
\\[1em]
\boxed{
  \begin{block}
   -->_c : \EvTerm \times  (\EvTerm \union \set{\error})
  \end{block}
}

\begin{equation*}
\begin{block}
\ev_1(\cast{\ev_2}{v :: \cS}) -->_c 
  \begin{cases}
    (\ev_2 \trans{<:} \ev_1){v} \\
    \error \qquad \text{if not defined}
  \end{cases}
\end{block}
\end{equation*}

\vspace{0.5em}

\textbf{Reduction}\\[0.5em]
\boxed{\red : \TermT{\cS} \times (\TermT{\cS} \union \set{\error})}

\begin{mathpar}
  \inference[(R$-->$)]{t^{\cS} --> r & r \in (\TermT{\cS} \union \set{\error})}{
    t^{\cS} \red r} 
  \and
  \inference[(R$g$)]{\evt -->_c \evt'}{
    g[\evt] \red g[r]} 
  \and
  \inference[(R$g$err)]{\evt --> _c\error}{
    g[\evt] \red \error
  }  \and
  \inference[(R$f$)]{t^{\cS}_1 \red t^{\cS}_2}{
    f[t^{\cS}_1] \red f[t^{\cS}_2]
  }
  \and
  \inference[(R$f$err)]{t^{\cS}_1 \red \error}{
    f[t^{\cS}_1] \red \error
  } 
\end{mathpar}
\end{small}
 \caption{\lgsec: Intrinsic Reduction}
  \label{fig:lgsec-dyn}
\end{figure}

The reduction rules are given in Fig.~\ref{fig:lgsec-dyn}.
The evidence inversion functions reflect the contravariance on
arguments and the need to stamp security labels on return types:
\begin{equation*}
\begin{block}
\invdom(\evpr{\cS'_{1} ->_{\clx'} \cS'_{2},\cS'_{1} ->_{\clx''} \cS''_{2}}) =
\evpr{\cS''_1,\cS'_1} \\[0.5em]
\invcod(\evpr{\cS'_{1} ->_{\clx'} \cS'_{2},\cS''_{1} ->_{\clx''} \cS''_{2}}) =
\evpr{\cS'_2 \cjoin \clx',\cS''_2 \cjoin \clx''} 
\end{block}
\end{equation*}

\subsection{Example}

Consider a simple lattice with two confidentiality levels,
$\lbot=\bot$ and $\ltop=\top$, and the following 
\emph{extrinsic} program definitions:
\begin{displaymath}
\begin{array}{rcll}
f & \triangleq & (\lambda x : {\Bool_\lbot}.x)_\lbot &
\text{a public channel}\\[1mm]
g & \triangleq & (\lambda x : {\Bool_\ul}.x)_\lbot &
\begin{tabular}[t]{@{}l@{}}
  an unknown channel \\
  that can be publicly used
\end{tabular} \\
v & \triangleq & \ff_\ltop &\text{a private value}\\
\end{array}
\end{displaymath}
$f\;v$ does not type check, but $f \; (g\; v)$ \emph{does}
typecheck, even though it fails at runtime.
Type checking yields the corresponding \emph{intrinsic} definitions:
\begin{displaymath}
(\evcast{\braket{\cS_f, \cS_f}}{f})
\iapp{\cS_f}
  \left(\evcast{\braket{\lbot,\lbot}}{(\braket{\cS_g,\cS_g}{g})}
  \iapp{\cS_g}
  (\evcast{\braket{\ltop,\ltop}}{v})) \right) \\[1mm]
\end{displaymath}
where the intrinsic subterms are essentially identical to their extrinsic 
counterparts:
\begin{displaymath}
\begin{array}{rcll}
f & \triangleq & 
(\lambda x^{\Bool_\lbot}.x)_\lbot &  \text{a public channel}\\[1mm]
g & \triangleq &
 (\lambda x^{\Bool_\ul}.x)_\lbot & 
\text{\begin{tabular}[t]{@{}l@{}}
    an unknown channel \\
    that can be publicly used
  \end{tabular}
}\\[1mm]
v & \triangleq &
 \ff_\ltop &\text{a private value}\\
\end{array}
\end{displaymath}

For conciseness, we abbreviate $\Bool_\clx$ with $\clx$, use $\cS_f$ and
$\cS_g$ to refer to the types of $f$ and $g$, and elide the application
operators $\iapp{\cS}$. At each step, we use grey boxes to highlight the focus
of reduction/rewriting, and underline the result.

\begin{small}
\begin{displaymath}
\begin{array}{ll}
& (\braket{\cS_f, \cS_f}f) \; \big(\braket{\lbot,\lbot}((\braket{\cS_g,\cS_g}\Gbox{g}) \;  (\braket{\ltop,\ltop}{v})) \big) \\[1mm]
= & (\braket{\cS_f, \cS_f}f) \; \big(\braket{\lbot,\lbot}(\Gbox{(\braket{\cS_g, \cS_g}\underline{(\lambda x^\ul.x)_\lbot^{\cS_g}}) \;
    (\braket{\ltop,\ltop}{v})}) \big)\\[1mm]
--> & (\braket{\cS_f, \cS_f}f) \; \big(\braket{\lbot,\lbot}(\underline{\cast{\ul,\ul}
      ([ \Gbox{(\braket{\ltop,\ltop} \trans{\sub} \braket{\ul,\ul})}
      {v  :: \ul}/x]x) :: \ul}) \big)\\[1mm]
=_{\trans{\sub}} & (\braket{\cS_f, \cS_f}f) \; \big(\braket{\lbot,\lbot}(\cast{\ul,\ul}
      (\Gbox{[ \underline{\braket{\ltop,\ltop}}{v  :: \ul}/x]x}) :: \ul) \big)\\[1mm]
=_{[t/t]t} & (\braket{\cS_f, \cS_f}f) \; \big(\braket{\lbot,\lbot}(\Gbox{\cast{\ul,\ul}
      (\underline{\braket{\ltop,\ltop}{v  :: \ul}})} :: \ul) \big)\\[1mm]
-->_c & (\braket{\cS_f, \cS_f}f) \; \big(\braket{\lbot,\lbot}(\underline{\Gbox{(\braket{\ltop,\ltop} \trans{\sub}
       \braket{\ul,\ul})}{v}} :: \ul) \big)\\[1mm]
=_{\trans{\sub}} & (\braket{\cS_f, \cS_f}f) \; \big(\Gbox{\braket{\lbot,\lbot}(\underline{\braket{\ltop,\ltop}}{v} :: \ul) }\big)\\[1mm]
-->_c & (\braket{\cS_f, \cS_f}f) \; \big(\underline{\error}\big) \\[1mm]

&\text{because }
  \braket{\ltop,\ltop} \trans{\sub} \braket{\lbot,\lbot} = 
  \merge{\sub}(\ltop, \ltop \meet \lbot, \lbot)\\
& \text{which is undefined because }  \ltop \meet \lbot \text{ is undefined.}
\end{array}
\end{displaymath}
\end{small}

\section{Noninterference for \lgsec}
\label{sec:non-interference}

\begin{figure*}[t]
\begin{displaymath}
\begin{array}{rcl}
  v_1 \rel v_2 : \Bool_\clx & \iff & 
  v_i \in \TermT{\Bool_\clx} \land 
  \clx \csubl \lobs \implies \bval{v_1} = \bval{v_2}\\
  v_1 \rel v_2 : \cS_1 ->_{\clx} \cS_2 & \iff & 
  v_i \in \TermT{\cS_1 ->_{\clx} \cS_2} \land \clx \csubl \lobs \implies
  \forall \ev_1, \ev_2, \cS'_1 ->_{\clx'} \cS'_2, \text{ and }\cS''_1
                                                \text{ such that}~\\[0.8em]
& & 
\ev_1 |-  \cS_1 ->_{\clx} \cS_2 \csub \cS'_1 ->_{\clx'} \cS'_2
    \text{ and } \ev_2 |- \cS''_1 \csub \cS'_1 \text{, we have: }\\[0.5em]
& & \forall v'_1 \rel v'_2 : \cS''_1.~
(\evcast{\ev_1}{v_1} \iapp{\cS'_1 ->_{\clx'} \cS'_2} \evcast{\ev_2}{v'_1}) \rel 
(\evcast{\ev_1}{v_2} \iapp{\cS'_1 ->_{\clx'} \cS'_2} \evcast{\ev_2}{v'_2}) : \rcomp{\cS_2 \cljoin \clx}\\
& & \\
t_1 \rel t_2 : \rcomp{\cS} & \iff & t_i \in
                                        \TermT{\cS}~
                                        \wedge~(t_1 \red^{*} v_1
                                        \wedge t_2 \red^{*} v_2 \implies v_1
                                        \rel v_2 : \cS) 
\end{array}
\end{displaymath}
\caption{Gradual security logical relations}
\label{fig:rels}
\end{figure*}

We establish noninterference for the gradual security language using
logical relations, adapting the technique from~\citet{zdancewic}. 

First, in the intrinsic
setting, the type environment related to an open term $t$ is simply the
set of (intrinsically-typed) free variables of the term, $FV(t)$. We
use the metavariable $\Gamma \in \Pow(\Var_{*})$ to denote such
``type-environments-as-sets''. 

Informally, the noninterference theorem states that a program with
low (visible) output and a high (private) input can be run with
different high-security values and, if terminating, will always yield
the same observable value.\footnote{The $\slabel$ function returns the
top-level security label of the given type:
$\slabel(\Bool_\clx) = \clx$ and $\slabel(\cS_1 ->_\clx \cS_2) = \clx$.} 

\begin{restatable}[Noninterference]{theorem}{ifni}
\label{ifni}
if $t \in \TermT{\Bool_\clx}$ with $FV(t) =
\set{x}, x \in \TermT{\cS}$, and 
$v_1, v_2 \in \TermT{\cS}$ with $\slabel(\cS) \ncsubl \clx$, then
\begin{displaymath}
t[v_1/x] \red^{*} v'_1 \wedge t[v_2/x] \red^{*} v'_2 => \bval{v'_1} = \bval{v'_2}
\end{displaymath}
\end{restatable}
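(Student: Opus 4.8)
The plan is to follow Zdancewic's logical-relations recipe, instantiated to the intrinsic gradual setting with evidence. The theorem is the "closed" instance of a fundamental lemma for open intrinsic terms, so first I would state and prove that fundamental lemma: for every intrinsic term $t \in \TermT{\cS}$ with $FV(t) = \set{\overline{x_i}}$ where $x_i \in \TermT{\cS_i}$, and for every pair of substitutions $\subst_1, \subst_2$ that are \emph{related} — meaning $\subst_1(x_i) \rel \subst_2(x_i) : \cS_i$ for each $i$ — we have $\subst_1(t) \rel \subst_2(t) : \rcomp{\cS}$. The proof is by induction on the structure of the intrinsic typing derivation of $t$ (Fig.~\ref{fig:lgsec-intrinsic}), with one case per term former. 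For each case I would unfold the definition of $\rel$ at the relevant type (Fig.~\ref{fig:rels}), run the reduction rules of Fig.~\ref{fig:lgsec-dyn} forward on both $\subst_1(t)$ and $\subst_2(t)$, and appeal to the induction hypotheses on the subterms together with the computational lemmas below. The ascription/evidence-cast case is handled by a separate sublemma: if $v_1 \rel v_2 : \cS$ and $\ev |- \cS \csub \cS'$, then $\cast{\ev}{v_1} :: \cS' \rel \cast{\ev}{v_2} :: \cS'$ whenever the cast does not fail, and if the cast fails on one side it fails on the other (since the evidence and the static label components of $v_1$ and $v_2$ agree — only the boolean payload may differ).

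The heart of the argument is the standard \emph{label-based} reasoning about the observation level $\lobs$. The key structural invariant, which I would isolate as a lemma, is that reduction only ever \emph{stamps labels upward}: if $t^{\cS} \red t'^{\cS}$ then the security labels appearing in the result are joins of labels already present, so a term whose result type has label $\csubl \lobs$ can only have been built from subterms whose contributing labels are also $\csubl \lobs$. Concretely, in the $\oplus$ case, $\clx_1 \cjoin \clx_2 \csubl \lobs$ forces $\clx_1 \csubl \lobs$ and $\clx_2 \csubl \lobs$ (monotonicity of $\cjoin$ against $\csubl$), so both operand induction hypotheses give $\bval{v_1} = \bval{v_2}$ on each side, and the static boolean operation $\llbracket \oplus \rrbracket$ then yields equal results. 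In the conditional case, if the predicate label $\clx_1 \not\csubl \lobs$ then the branches are "invisible" and the result type label $\cS_2 \cssubjoin \cS_3) \cjoin \clx_1$ is also $\not\csubl \lobs$, so the relation at that type holds vacuously; if $\clx_1 \csubl \lobs$ then the two runs take the \emph{same} branch (same boolean by the IH on the predicate) and the IH on that branch finishes it. The application case combines both phenomena: I would use the function-type clause of $\rel$ directly, which is designed to quantify over exactly the evidences $\ev_1, \ev_2$ and instantiation types that the reduction rule for $(\lambda x.t)_\clx$ produces, so the IH on the function value applied to the (related, by IH) argument value gives the result — with the return label correctly stamped by $\cljoin \clx$ via $\invcod$.

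The main obstacle, and where I would spend most of the effort, is the interaction between \emph{evidence} and the logical relation — this is the genuinely new ingredient beyond Zdancewic's proof. Two runs with different private inputs may carry syntactically different terms but must carry the \emph{same} evidence objects $\ev = \pr{\cS,\cS}$, because evidence records only type/label information, never value payloads; I would make this precise by defining $\rel$ so that related values are forced to be identical except in boolean payloads hidden above $\lobs$, and then checking that every reduction step preserves this (the substitution $[((\ev_2 \trans{<:} \invdom(\ev_1)){u} :: \cS_{11})/x] t$ inserts the \emph{same} coerced value structure on both sides). The subtle point is consistent transitivity $\trans{\sub}$: I must verify that $\ev_2 \trans{\sub} \invdom(\ev_1)$ is defined on the left run if and only if it is defined on the right run — which holds because definedness of $\merge{\sub}$ depends only on the gradual types in the evidences, not on any runtime value — so the two runs either both step or both raise $\error$, and in the $\error$ case the implication in the theorem statement is vacuously true. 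Once these evidence-coherence lemmas are in place, the closed theorem follows: take $\subst_i = [v_i/x]$; since $\slabel(\cS) \not\csubl \clx$ need not imply $v_1 \rel v_2 : \cS$ directly, I instead observe that at type $\Bool_\clx$ with $\clx \csubl \lobs$ chosen appropriately (set $\lobs$ so that $\clx \csubl \lobs$ but $\slabel(\cS) \not\csubl \lobs$, using $\slabel(\cS) \not\csubl \clx$), the hypothesis $v_1 \rel v_2 : \cS$ holds vacuously at the function clause or is forced trivial at the boolean clause, feed this into the fundamental lemma to get $t[v_1/x] \rel t[v_2/x] : \rcomp{\Bool_\clx}$, and unfold $\rcomp{\cdot}$ and the $\Bool_\clx$ clause of $\rel$ to extract $\bval{v'_1} = \bval{v'_2}$.
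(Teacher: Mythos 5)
Your proposal follows essentially the same route as the paper: noninterference is obtained as a corollary of a fundamental lemma (the paper's ``Substitution preserves relation'', Lemma~\ref{lm:relsubst}) proved by induction on the intrinsic typing derivation, with auxiliary lemmas for ascription (Lemma~\ref{lm:relasc}), canonical forms, and reduction-closure of the relation, and with the closing step instantiating $\lobs$ so that $\clx \csubl \lobs$ but $\slabel(\cS) \ncsubl \lobs$, which makes $v_1 \rel v_2 : \cS$ hold vacuously. One caveat: your claim that casts fail on one run iff they fail on the other (because related values differ only in boolean payloads) is stronger than the paper's extensional relation supports and is not needed --- the paper simply discharges all $\error$ cases vacuously via termination-insensitivity, an observation you also make, so your argument does not actually depend on the over-claim.
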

\begin{proof}
The result follows by using the method of logical relations
(following~\citet{zdancewic}), as a special case of
Lemma~\ref{lm:relsubst} below.
\end{proof}

Note that we compare equality of bare values at base types, stripping the
checking-related information (labels, evidences and
ascriptions): \ie~$\bval{b_\clx} = b$ and
$\bval{\cast{\ev}{b_\clx::\cS}} = b$. Also, gradual programs can
fail. We establish
{\em termination-insensitive} noninterference, meaning in particular
that any program
may run into an $\error$ without violating noninterference.

\begin{definition}[Gradual security logical relations]
  For an arbitrary element $\lobs$ of the security lattice, the
  $\lobs$-level gradual security relations are type-indexed binary
  relations on closed terms defined inductively as presented in Figure~\ref{fig:rels}.
The notation $v_1 \rel v_2 : \cS$ indicates that $v_1$ is related to
$v_2$ at type $\cS$ when observed at the security level $\lobs$. Similarly, the notation $t_1 \rel t_2 : \rcomp{\cS}$
indicates that $t_1$ and $t_2$ are related computations that produce
related values at type $\cS$ when observed at the security level $\lobs$.
\end{definition}

The logical relations are very similar to those of~\citet{zdancewic},
except for the points discussed above and the fact that we account
for subtyping in the relation between values at a function type
(recall that our type system is syntax-directed).

\begin{definition}[Secure program]
A well-typed program $t$ that produces a $\lobs$-observable output of
type $\cS$ (\ie~$\slabel(\cS) \csubl \lobs$) is secure iff $t \rel t : \rcomp{\cS}$.
\end{definition}

\begin{definition}[Related substitutions]
Two substitutions $\subst_1$ and $\subst_2$ are related, notation
$\Gamma |- \subst_1 \rel \subst_2$, if $\subst_i |= \Gamma$ and \\
$\forall x^\cS \in \Gamma. \subst_1(x^\cS) \rel \subst_2(x^\cS) : \cS$.
\end{definition}
 
\begin{restatable}[Substitution preserves typing]{lemma}{ifsubstpres}
\label{lm:if-subst-pres}
If $\itm{} \in \TermT{\cS}$ and $ \subst |= FV(\itm{})$ then
$\subst(\itm{}) \in \TermT{\cS}$.
\end{restatable}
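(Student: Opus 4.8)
The plan is to prove the lemma by structural induction on the intrinsic term $\itm{}$, following the formation rules of Figure~\ref{fig:lgsec-intrinsic} (in the intrinsic presentation the term and its typing derivation are the same object). To make the induction carry through under $\lambda$-binders I would first slightly generalize the statement: for every substitution $\subst$ whose domain contains $FV(\itm{})$ and which maps each $x^{\cS'} \in \dom(\subst)$ to a term in $\TermT{\cS'}$, we have $\subst(\itm{}) \in \TermT{\cS}$. The lemma as stated is the special case $\dom(\subst) = FV(\itm{})$, for which the hypothesis $\subst |= FV(\itm{})$ supplies exactly the required per-variable typing invariant.

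The base cases are immediate. For a variable $x^{\cS}$ we have $FV(x^\cS) = \set{x^\cS}$, so $x^\cS \in \dom(\subst)$ and $\subst(x^\cS) \in \TermT{\cS}$ directly from the invariant on $\subst$. For a Boolean constant $b_\clx$ there are no free variables, so $\subst(b_\clx) = b_\clx \in \TermT{\Bool_\clx}$. For the compound forms --- the binary operation $\cast{\ev_1}{t^{\cS_1}} \oplus^{\clx_1 \cjoin \clx_2} \cast{\ev_2}{t^{\cS_2}}$, the application $\cast{\ev_1}{\itm{1}} \iapp{\cS_{11} ->_{\clx} \cS_{12}} \cast{\ev_2}{\itm{2}}$, the conditional, and the ascription $\cast{\ev_1}{t^{\cS_1}} :: \cS_2$ --- the crucial observation is that a capture-avoiding substitution acts only on the immediate term subterms and is completely inert on all the type-level decorations: the evidences $\ev_i$, the type annotations, and hence the side conditions $\ev_i |- \cS_i \csub \cdots$ of the formation rules. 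So one applies the induction hypothesis to each subterm under the same $\subst$, then reassembles with the very same formation rule and the very same evidences, landing in the same $\TermT{\cS}$.

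The only case requiring real care is the $\lambda$-abstraction $(\lambda x^{\cS_1}.t^{\cS_2})_\clx \in \TermT{\cS_1 ->_{\clx} \cS_2}$, where the substitution must be pushed under the binder. Assuming without loss of generality (by $\alpha$-renaming) that $x^{\cS_1}$ is fresh for $\subst$, capture-avoiding substitution gives $\subst((\lambda x^{\cS_1}.t^{\cS_2})_\clx) = (\lambda x^{\cS_1}.\subst'(t^{\cS_2}))_\clx$, where $\subst' = \subst \cup \set{x^{\cS_1} \mapsto x^{\cS_1}}$ extends $\subst$ with an identity binding for the bound variable. This $\subst'$ still satisfies the generalized invariant, since $x^{\cS_1} \in \TermT{\cS_1}$ by the variable rule and $FV(t^{\cS_2}) \subseteq \dom(\subst) \cup \set{x^{\cS_1}} = \dom(\subst')$. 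The induction hypothesis applied to $t^{\cS_2}$ then yields $\subst'(t^{\cS_2}) \in \TermT{\cS_2}$, and the $\lambda$-formation rule gives $(\lambda x^{\cS_1}.\subst'(t^{\cS_2}))_\clx \in \TermT{\cS_1 ->_{\clx} \cS_2}$, as required. I expect the main obstacle to be purely the bookkeeping around this case --- choosing the generalization of the statement and tracking $\dom(\subst)$ so that the induction hypothesis applies across binders; every other case is routine precisely because substitution never disturbs the evidences or type annotations on which the intrinsic formation rules depend.
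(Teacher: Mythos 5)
Your proposal is correct and matches the paper's proof, which is exactly an induction on the derivation of $\itm{} \in \TermT{\cS}$; the paper leaves the details (the base cases, the inertness of substitution on evidences and type annotations, and the binder case) implicit, while you spell them out, including the standard generalization of the substitution invariant needed to cross the $\lambda$-binder. No gaps.
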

\begin{proof}
By induction on the derivation of $\itm{} \in \TermT{\cS}$
\end{proof}

\begin{restatable}[Reduction preserves relations]{lemma}{relred}
\label{lm:relred}
Consider $t_1, t_2 \in \TermT{\cS}$. Posing $t_i \red^{*} t'_i$, we have
$t_1 \rel t_2 : \rcomp{\cS}$ if and only if $t'_1 \rel t'_2 : \rcomp{\cS}$
\end{restatable}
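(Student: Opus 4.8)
The plan is to prove both directions by unfolding the definition of $t_1 \rel t_2 : \rcomp{\cS}$ and exploiting the fact that the intrinsic reduction relation $\red$ is deterministic. Recall that $t_1 \rel t_2 : \rcomp{\cS}$ holds iff $t_i \in \TermT{\cS}$ and, whenever $t_1 \red^{*} v_1$ and $t_2 \red^{*} v_2$, then $v_1 \rel v_2 : \cS$. Since the relation only constrains behavior on terminating runs, and $\red^{*}$ is confluent (indeed deterministic), the set of values reachable from $t_i$ coincides with the set of values reachable from any $t'_i$ with $t_i \red^{*} t'_i$.

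First I would establish the auxiliary fact that if $t \in \TermT{\cS}$ and $t \red^{*} t'$, then for any value $v$, $t \red^{*} v$ if and only if $t' \red^{*} v$. The ``if'' direction is immediate by prepending the reduction $t \red^{*} t'$. The ``only if'' direction uses determinacy of $\red$: from $t$ both $t \red^{*} t'$ and $t \red^{*} v$ hold, and since $v$ is a value (hence irreducible) and $\red$ is a function, the reduction sequence $t \red^{*} v$ must pass through $t'$, so $t' \red^{*} v$. This requires a short argument that the intrinsic reduction of Fig.~\ref{fig:lgsec-dyn} is deterministic, which follows from the syntactic disjointness of the frame forms $f$, $g$ and the notions of reduction; I would cite this (or prove it inline by a routine case analysis on the structure of a term in $\TermT{\cS}$).

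Given that fact, both directions of the biconditional follow at once. For the forward direction, assume $t_1 \rel t_2 : \rcomp{\cS}$ and suppose $t'_1 \red^{*} v_1$ and $t'_2 \red^{*} v_2$; then $t_1 \red^{*} v_1$ and $t_2 \red^{*} v_2$ by the auxiliary fact, so $v_1 \rel v_2 : \cS$ by hypothesis; also $t'_i \in \TermT{\cS}$ by subject reduction (Lemma~\ref{lm:if-subst-pres}'s companion preservation result, or the well-typedness built into the intrinsic reduction signature $\red : \TermT{\cS} \times (\TermT{\cS} \union \set{\error})$). The backward direction is symmetric, using the other direction of the auxiliary fact.

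The main obstacle is the determinacy argument: one must be careful that the decomposition of a reducible intrinsic term into an evaluation context and a redex is unique, including the interaction between the two context layers ($f = g[\cast{\ev}{[]}]$) and the two reduction relations $\red$ and $-->_c$. Once uniqueness of decomposition and single-step determinacy are in hand, the rest is bookkeeping. I would also note that the $\error$ case needs no special treatment here, since $\error$ is not a value and the statement only quantifies over terminating runs to values; however, the same determinacy argument shows that if $t_i \red^{*} \error$ then $t'_i \red^{*} \error$ and conversely, which keeps the two directions genuinely equivalent rather than merely one-directional.
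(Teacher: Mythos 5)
Your proof is correct, and it follows the same basic route as the paper --- unfold the definition of $t_1 \rel t_2 : \rcomp{\cS}$ and relate the values reachable from $t_i$ to those reachable from $t'_i$ --- but you supply an ingredient the paper leaves implicit. The paper's proof is a one-liner citing only ``transitivity of $\red^{*}$,'' which in fact justifies only the forward direction of your auxiliary fact ($t'_i \red^{*} v$ implies $t_i \red^{*} v$), and hence only the ``only if'' half of the lemma. The ``if'' half (recovering $t_1 \rel t_2$ from $t'_1 \rel t'_2$) needs exactly what you identify: determinacy (or at least confluence plus irreducibility of values) of the intrinsic reduction, so that a run $t_i \red^{*} v_i$ must factor through $t'_i$. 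Your observation that this in turn rests on unique decomposition into the two layers of evaluation contexts ($f = g[\ev\,[]]$) and the disjointness of $-->$ and $-->_c$ is the right obstacle to flag; it is routine but not free, and the paper never states it. So your write-up is, if anything, more honest than the original: same decomposition, one extra (genuinely needed) lemma made explicit.
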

\begin{proof}
Direct by definition of $t_1 \rel t_2 : \rcomp{\cS}$ and transitivity
of $\red^{*}$.
\end{proof}

\begin{restatable}[Canonical forms]{lemma}{ifcf}
\label{lm:ifcf}
Consider a value $v \in \TermT{\cS}$. Then either $v = u$, or $v =
\cast{\ev}{u} :: \cS$ with $u \in \TermT{\cS'}$ and $\ev |- \cS' \csub \cS$. Furthermore:
\begin{enumerate}
\item If $\cS = \Bool_\clx$ then either $v=b_\clx$ or
$v=\cast{\ev}{b_{\clx'} :: \Bool_\clx}$ with $b_{\clx'} \in
\TermT{\Bool_{\clx'}}$ and 
$\ev |- \Bool_{\clx'}
\csub \Bool_\clx$.
\item  If $\cS = \cS_1 ->_{\clx} \cS_2$ then either 
$v=(\lambda x^{\cS_1}.\itm{2})_\clx$ with $\itm{2} \in \TermT{\cS_2}$ or
$v=\cast{\ev}{(\lambda x^{\cS'_1}.t^{\cS'_2})_{\clx'} :: \cS_1 ->_{\clx} \cS_2 }$
with $t^{\cS'_2} \in \TermT{\cS'_2}$ and 
$\ev |-  \cS'_1 ->_{\clx'} \cS'_2
\csub  \cS_1 ->_{\clx} \cS_2$.
\end{enumerate}
\end{restatable}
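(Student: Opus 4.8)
The plan is to proceed by case analysis on the syntactic shape of the value $v$, following the grammar for $v$ in Figure~\ref{fig:lgsec-dyn}, and in each case to invert the intrinsic typing rules of Figure~\ref{fig:lgsec-intrinsic} to determine the type. By that grammar $v$ is either a simple value $u$ or an ascribed value $\cast{\ev}{u} :: \cS'$. In the latter case, the only intrinsic term rule whose conclusion has the form of an ascription is the $(::)$ rule, and its conclusion $\TermT{\cS'}$ forces the ascribed type to coincide with $\cS$, while its premises give $u \in \TermT{\cS'}$ for some $\cS'$ with $\ev |- \cS' \csub \cS$. This already yields the coarse dichotomy in the statement.

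For the refinements I would case on the outermost constructor of $\cS$. When $v = u$: a simple value is $x$, $b_\clx$, or $(\lambda x^{\cS_1}.t^{\cS_2})_\clx$; the variable case does not occur for the closed values to which the lemma is applied (and is otherwise vacuous), the intrinsic rule for $b_\clx$ forces $\cS = \Bool_\clx$ (first alternative of part~(1)), and the intrinsic rule for $\lambda$ forces $\cS = \cS_1 ->_\clx \cS_2$ with $t^{\cS_2} \in \TermT{\cS_2}$ (first alternative of part~(2)).

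When $v = \cast{\ev}{u} :: \cS$ with $u \in \TermT{\cS'}$ and $\ev |- \cS' \csub \cS$: I would first invert the consistent subtyping judgment by cases on $\cS$. Both $\csub$ rules are constructor-preserving, so $\cS = \Bool_\clx$ forces $\cS' = \Bool_{\clx'}$ for some $\clx'$, and $\cS = \cS_1 ->_\clx \cS_2$ forces $\cS' = \cS'_1 ->_{\clx'} \cS'_2$. Feeding this shape of $\cS'$ back into the intrinsic typing rules for the simple value $u$ then leaves exactly one possibility in each case: $u = b_{\clx'}$ with $b_{\clx'} \in \TermT{\Bool_{\clx'}}$ for part~(1), and $u = (\lambda x^{\cS'_1}.t^{\cS'_2})_{\clx'}$ with $t^{\cS'_2} \in \TermT{\cS'_2}$ for part~(2) --- precisely the second alternatives claimed.

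The only genuine subtlety is this inversion of consistent subtyping used to transfer the outermost constructor of $\cS$ to $\cS'$: one must observe that there is no ``mixed'' consistent subtyping relating a $\Bool$ type and a function type. This is immediate by inspecting the two $\csub$ rules, each of which relates types with the same outermost constructor, but it is the step the argument truly rests on; the remaining work is routine inversion of the (essentially syntax-directed) intrinsic typing rules.
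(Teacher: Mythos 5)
Your proof is correct and takes essentially the same route as the paper, which simply says ``by direct inspection of the formation rules of gradual intrinsic terms''; you merely spell out the inspection, including the (correct and needed) observation that consistent subtyping is constructor-preserving, which forces the shape of $\cS'$ in the ascribed cases.
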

\begin{proof}
By direct inspection of the formation rules of gradual intrinsic terms (Figure~\ref{fig:lgsec-intrinsic}).
\end{proof}

\begin{restatable}[Ascription preserves relation]{lemma}{relasc}
\label{lm:relasc} Suppose $\ev |- \cS' \csub \cS$.
\begin{enumerate}
\item If $v_1 \rel v_2 : \cS'$ then 
$(\cast{\ev}{v_1 :: \cS}) \rel (\cast{\ev}{v_2 :: \cS}) :\rcomp{\cS}$.
\item If $t_1 \rel t_2 : \rcomp{\cS'}$ then $(\cast{\ev}{t_1 :: \cS}) \rel (\cast{\ev}{t_2 :: \cS}) :\rcomp{\cS}$.
\end{enumerate}
\end{restatable}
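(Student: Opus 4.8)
The plan is to obtain part~(2) as a corollary of part~(1), and to prove part~(1) by a canonical-forms analysis on $v_1,v_2$ followed by a case split on $\cS$ (the arrow case will invoke part~(2) at a strictly smaller type, so in fact both parts are proved simultaneously by induction on the type). For part~(2), note that an ascription $\cast{\ev}{t_i :: \cS}$ can step only by stepping its subject $t_i$ inside the frame $\cast{\ev}{[]} :: \cS$. Hence, if both $\cast{\ev}{t_1 :: \cS}$ and $\cast{\ev}{t_2 :: \cS}$ reach values, then each $t_i$ first reaches a value $w_i$ with $\cast{\ev}{w_i :: \cS} \red^{*} v'_i$, whereas if some $t_i$ diverges or reaches $\error$ then so does $\cast{\ev}{t_i :: \cS}$ and $\rcomp{\cS}$ holds vacuously. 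From $t_1 \rel t_2 : \rcomp{\cS'}$ we get $w_1 \rel w_2 : \cS'$ by definition of $\rcomp{\cS'}$; part~(1) gives $\cast{\ev}{w_1 :: \cS} \rel \cast{\ev}{w_2 :: \cS} : \rcomp{\cS}$; and since these reach the values $v'_1,v'_2$, unfolding $\rcomp{\cS}$ yields $v'_1 \rel v'_2 : \cS$. Well-typedness is immediate from the ascription formation rule.

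For part~(1), Lemma~\ref{lm:ifcf} gives that each $v_i$ is either a simple value $u_i$ or has the form $\cast{\ev'_i}{u_i} :: \cS'$. In the first case $\cast{\ev}{v_i :: \cS}$ is already the value $\cast{\ev}{u_i} :: \cS$; in the second it matches the cast-combination redex and steps to $\cast{(\ev'_i \trans{\sub} \ev)}{u_i} :: \cS$ if $\ev'_i \trans{\sub} \ev$ is defined and to $\error$ otherwise. So in every case $\cast{\ev}{v_i :: \cS}$ either reaches $\error$ (making $\rcomp{\cS}$ vacuous) or reaches a value $v''_i = \cast{\ev''_i}{u_i} :: \cS$ wrapping the same underlying simple value as $v_i$ under an evidence $\ev''_i \in \{\ev,\ \ev'_i \trans{\sub} \ev\}$, and it remains to show $v''_1 \rel v''_2 : \cS$ whenever both sides reach values. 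I would case on $\cS$. If $\cS = \Bool_\clx$, each $u_i$ is a labelled Boolean, $\bval{v''_i} = \bval{v_i}$, and since $\cS' \csub \cS$ forces $\cS'$ to also be a $\Bool$-type, $v_1 \rel v_2 : \cS'$ already equates the bare Booleans whenever the relation deems $v_1,v_2$ to be $\lobs$-observable; it then suffices to check that composing evidence via $\trans{\sub}$ — which goes through a label meet and the $\merge{\subl}$ rules, hence only ever carries a label at least as secret — cannot make $v''_i$ observable unless $v_i$ already was, so the required agreement of Booleans carries over.

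If $\cS = \cS_1 ->_{\clx} \cS_2$, I would fix arbitrary $\ev_a,\ev_b$, an interface type $\cS'_1 ->_{\clx_0} \cS'_2$, an argument type $\cS''_1$ with $\ev_a |- \cS_1 ->_\clx \cS_2 \csub \cS'_1 ->_{\clx_0} \cS'_2$ and $\ev_b |- \cS''_1 \csub \cS'_1$, and related arguments $w_1 \rel w_2 : \cS''_1$. The crux is that reducing $\cast{\ev_a}{v''_i}$ composes $\ev''_i$ with $\ev_a$, and by associativity of $\trans{\sub}$ this is precisely the evidence obtained by applying $v_i$ itself wrapped in $\ev \trans{\sub} \ev_a$; hence $\cast{\ev_a}{v''_i} \iapp{\cS'_1 ->_{\clx_0} \cS'_2} \cast{\ev_b}{w_i}$ and $\cast{(\ev \trans{\sub} \ev_a)}{v_i} \iapp{\cS'_1 ->_{\clx_0} \cS'_2} \cast{\ev_b}{w_i}$ reduce to a common term (or both diverge, or both reach $\error$), so by Lemma~\ref{lm:relred} it is enough to relate the latter pair. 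Those are an instance of the function clause of $v_1 \rel v_2 : \cS'$, applied with the well-formed evidence $\ev \trans{\sub} \ev_a$ (obtained by composing $\ev |- \cS' \csub \cS_1 ->_\clx \cS_2$ with $\ev_a$), the same interface type, $\ev_b$, and $w_1 \rel w_2 : \cS''_1$; the $\invdom$/$\invcod$/stamping bookkeeping produces the matching result type, and the $\csub$-coarsening of the codomain induced by the evidences is absorbed by part~(2) of the lemma applied at the codomain. The side condition $\clx \csubl \lobs$ of the clause being proved transfers to the corresponding condition on $\cS'$ by the same observability-preservation argument as in the base case, and otherwise the clause is vacuous.

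The hard part will be the arrow case, and within it the evidence algebra: establishing that applying the ascribed value and applying the original value genuinely produce the same redex rests on associativity of consistent transitivity $\trans{\sub}$ together with the definitions of $\invdom$, $\invcod$ and $\cjoin$, and one must also verify that a failure of consistent transitivity on one side forces a failure — hence an $\error$ — or vacuity on the other, so that the computation relation is never violated. The base case reduces to the observation that $\trans{\sub}$ only ever propagates more-secret labels, and everything else is bookkeeping with the intrinsic formation rules and appeals to Lemmas~\ref{lm:ifcf} and~\ref{lm:relred}.
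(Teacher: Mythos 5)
Your overall skeleton matches the paper's proof: part~(2) is obtained from part~(1) by unfolding the computation relation along the ascription frame, and part~(1) uses canonical forms (Lemma~\ref{lm:ifcf}), the observation that the ascribed term is a \emph{computation} that either steps to $\error$ (vacuously related) or to a value $\cast{\ev''_i}{u_i} :: \cS$, and an induction that is essentially the paper's induction on the judgment $\ev |- \cS' \csub \cS$, merely phrased on the structure of $\cS$. Your arrow case additionally leans on associativity of consistent transitivity $\trans{\sub}$, which holds in the AGT framework but is nowhere established in this paper and would need to be proved as an auxiliary lemma; the paper's induction on the subtyping judgment gets the contravariant/covariant hypotheses directly and does not need it.

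The genuine gap is in your $\Bool$ case. The observability test in the relation is on the \emph{type index}, not on whatever evidence a value happens to carry, so what you must supply is the implication $\clx \csubl \lobs \Rightarrow \clx' \csubl \lobs$ for $\cS' = \Bool_{\clx'}$, $\cS = \Bool_{\clx}$. In the static proof this is exactly where transitivity of $\subl$ is used, and consistent ordering $\csubl$ is not transitive: take $\cS' = \Bool_{\top}$, $\cS = \Bool_{\?}$, $\lobs$ strictly below $\top$, and $\ev = \braket{\Bool_{\top},\Bool_{\top}}$ (the interior of $\Bool_{\top} \csub \Bool_{\?}$). Then $\? \csubl \lobs$ holds while $\top \ncsubl \lobs$, so $v_1 \rel v_2 : \Bool_{\top}$ says nothing about the bare Booleans, yet the obligation at $\Bool_{\?}$ demands $\bval{v''_1} = \bval{v''_2}$; and if the $v_i$ are bare values no evidence combination ever occurs, so there is no $\error$ escape hatch. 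Your stated justification --- that $\trans{\sub}$ ``only ever carries a label at least as secret'' --- does not bear on this, since no composition happens and the label being tested is the ascription target's. This is precisely the delicate point the lemma must confront (the paper's sketch keeps the label-ordering evidence of $\ev |- \cS' \csub \cS$ in hand by inducting on that judgment, though it too is silent on how the $\Bool$ case concludes); as written, your base case does not go through.
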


\begin{proof}
Following~\citet{zdancewic}, the proof proceeds by induction on the
judgment $\ev |- \cS' \csub \cS$. The difference here is that
consistent subtyping is justified by evidence, and that the terms have
to be ascribed to exploit subtyping. In particular, case 1 above
establishes a computation-level relation because each ascribed term $(\cast{\ev}{v_i ::
  \cS})$ may not be a value: each value $v_i$
is either a bare value $u_i$ or a casted value $\cast{\ev_i}{u_i ::
  \cS_i}$, with $\ev_i |- S_i
\csub \cS$. In
the latter case, $(\cast{\ev}{(\cast{\ev_i}{u_i :: \cS_i}) :: \cS})$
either steps to $\error$ (in which case the relation is vacuously
established), or steps to $\cast{\ev'}{u_i :: \cS}$, which is a
value.
\end{proof}

Noninterference follows directly from the following lemma, which
establishes that substitution preserves the logical relations:
\begin{restatable}[Substitution preserves relation]{lemma}{relsubst}
\label{lm:relsubst}
If $\itm{} \in \TermT{\cS}$, $\Gamma = FV(\itm{})$, and $\Gamma |- \subst_1 \rel
\subst_2$, then $\subst_1(\itm{}) \rel \subst_2(\itm{}) : \rcomp{\cS}$.
\end{restatable}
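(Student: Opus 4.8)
The plan is to prove the Fundamental Lemma by induction on the derivation of $\itm{} \in \TermT{\cS}$, following the structure of the intrinsic typing rules in Figure~\ref{fig:lgsec-intrinsic}, and leaning heavily on the auxiliary lemmas already established: Lemma~\ref{lm:relred} (reduction preserves relations), Lemma~\ref{lm:ifcf} (canonical forms), and especially Lemma~\ref{lm:relasc} (ascription preserves the relation). Throughout I write $t'_i = \subst_i(\itm{})$ and must show $t'_1 \rel t'_2 : \rcomp{\cS}$, which by definition means: both $t'_i \in \TermT{\cS}$ (this is Lemma~\ref{lm:if-subst-pres}, so it is free), and if $t'_1 \red^{*} v_1$ and $t'_2 \red^{*} v_2$ then $v_1 \rel v_2 : \cS$.

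First I would dispatch the easy cases. For a variable $x^{\cS}$, the conclusion is exactly the hypothesis $\Gamma |- \subst_1 \rel \subst_2$ composed with the fact that a value is trivially a related computation. For a Boolean literal $b_\clx$, substitution does nothing, $b_\clx$ is already a value, and if $\clx \csubl \lobs$ then $\bval{b_\clx} = b = \bval{b_\clx}$. For a $\lambda$-abstraction $(\lambda x^{\cS_1}.\itm{2})_\clx$, I apply the induction hypothesis to the body $\itm{2}$ under an extended related substitution (extending $\subst_i$ by mapping the bound variable to the two related argument values provided by the definition of the relation at function type), then use Lemma~\ref{lm:relasc} and Lemma~\ref{lm:relred} to push the argument-side evidence and the return-side ascription/stamping through; when $\clx \ncsubl \lobs$ the relation at function type holds vacuously. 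The ascription case $\cast{\ev_1}{\itm{1}} :: \cS_2$ is immediate from the induction hypothesis on $\itm{1}$ together with Lemma~\ref{lm:relasc}(2).

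The substantive cases are the elimination forms: binary operators, application, and conditionals. In each, the term is a frame applied to evidence-cast subterms; I apply the induction hypothesis to each immediate subterm (with the appropriate restriction of $\subst_i$ to its free variables), obtaining related computations, and then reason about how the frame combines related values. For the operator case, if both casted operands reduce to related Boolean values, I must check that the primitive reduction rule produces related results at $\Bool_{\clx_1 \cjoin \clx_2}$: if this output label is $\csubl \lobs$ then both $\clx_i \csubl \lobs$, so the underlying booleans agree by the hypotheses, hence $b_1 \llbracket\oplus\rrbracket b_2$ agrees, and the surrounding evidence and ascription are handled by Lemma~\ref{lm:relasc}. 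For the conditional, if the scrutinee reduces to related Booleans $b_\clx$: either $\clx \csubl \lobs$, the two booleans are equal, both sides take the same branch, and the result follows from the induction hypothesis on that branch plus Lemma~\ref{lm:relasc} for the surrounding ascription to $(\cS_2 \cssubjoin \cS_3)\cjoin\clx$; or $\clx \ncsubl \lobs$, in which case $\slabel$ of the result type is $\ncsubl \lobs$ and the value relation holds vacuously — this is the classic ``high guard'' argument, and the point is that the stamping by $\clx$ in the result type is exactly what makes it go through.

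The main obstacle is the application case. Here I apply the induction hypothesis to get a related pair of function values $v_1^f \rel v_2^f : \cS_1$ (the type of the first subterm) and a related pair of argument values. I then need to invoke the clause of the logical relation at function type — but that clause is stated in terms of an outer evidence $\ev_1$ justifying $\cS_1 \csub (\cS_{11} ->_\clx \cS_{12})$, an argument evidence $\ev_2$, and quantifies over related arguments at the evidenced domain type; so I must match the evidences $\ev_1, \ev_2$ that appear syntactically in the intrinsic application term to the universally-quantified $\ev_1, \ev_2$ in the relation, and check that $v'_1 \rel v'_2 : \cS''_1$ instantiates correctly. A complication is that by canonical forms (Lemma~\ref{lm:ifcf}(2)) each $v_i^f$ may be a bare $\lambda$ or an ascribed $\lambda$ carrying its own evidence; in the ascribed case the operational rule first composes evidences via $\trans{\sub}$ (consistent transitivity through $\invdom$), which may fail and step to $\error$ — but failure makes the computation relation hold vacuously, and on success the composed evidence is what feeds the $\beta$-reduction, whose substituted body is then related by the function-relation clause. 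The delicate bookkeeping is tracking the $\cjoin$-stamping on the codomain (encoded in $\invcod$ and in $\rcomp{\cS_2 \cjoin \clx}$) so that the label arithmetic on the result matches the index of the relation; I expect this to be where most of the real work lies, and where Zdancewic's original argument must be adapted to the evidence-carrying setting.
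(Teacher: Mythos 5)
Your plan follows essentially the same route as the paper's proof: induction on the intrinsic term-formation derivation, dispatching variables, Booleans, lambdas and ascriptions via Lemmas~\ref{lm:if-subst-pres}, \ref{lm:relred} and \ref{lm:relasc}, handling the elimination forms by combining the induction hypotheses with canonical forms (Lemma~\ref{lm:ifcf}), treating $\error$ and high-label cases as vacuous, and discharging application by instantiating the function-type clause of the logical relation with the evidences $\ev_1,\ev_2$ from the intrinsic term. The subtleties you flag (evidence composition via $\trans{\sub}$ possibly failing, and the $\cljoin$-stamping of the codomain matching the relation's index) are exactly the points the paper's proof addresses, so no gap is apparent.
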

\begin{proof}
By induction on the derivation that $t \in \TermT{\cS}$. Considering
the last step used in the derivation:

\begin{case}[$x$] $\itm{} = x^\cS$ so $\Gamma = \{ x^\cS \}$. $\Gamma |- \subst_1 \rel
\subst_2$ implies by definition that $\subst_1(x^\cS) \rel
\subst_2(x^\cS) : \cS$.
\end{case}

\begin{center}--------\end{center}
\begin{case}[$b$] $\itm{} = b_\clx$. By definition of substitution,
  $\subst_1(b_\clx) = \subst_2(b_\clx) = b_\clx$. By definition,
  $b_\clx \rel b_\clx : \Bool_\clx$ as required.
\end{case}

\begin{center}--------\end{center}
\begin{case}[$\lambda$]
$\itm{} = (\lambda x^{\cS_1}.t^{\cS_2})_\clx$. Then $\cS = \cS_1 ->_\clx \cS_2$.

\noindent By definition of substitution, assuming $x^{\cS_1} \not\in
\dom(\subst_i)$,  and Lemma~\ref{lm:if-subst-pres}:
\begin{displaymath}
\subst_i(\itm{}) = (\lambda x^{\cS_1}.\subst_i(t^{\cS_2}))_\clx \in
\TermT{\cS}
\end{displaymath}
If $\clx \ncsubl \lobs$ the result holds trivially because all
function values are related in such a cases. Assume
$\clx \csubl \lobs$, and assume two values $v_1$ and $v_2$ such that
$v_1 \rel v_2 : \cS'_1$, with $\ev_2 |- \cS'_1 \csub \cS_1$.  (We omit
the $\iapp{\cS}$ operator in applications below since we simply pick
$\cS = \cS_1 ->_\clx \cS_2$.)\\
We need to show that:
\begin{displaymath}
\begin{array}{rcl}
 & \cast{\ev_1}{(\lambda x^{\cS_1}.\subst_1(t^{\cS_2}))_\clx}  
\; \cast{\ev_2}{v_1} & \\
\rel &
\cast{\ev_1}{(\lambda
x^{\cS_1}.\subst_2(t^{\cS_2}))_\clx}
\; \cast{\ev_2}{v_2} &
: \rcomp{\cS_2 \cljoin \clx}
\end{array}
\end{displaymath}
Each $v_i$ is either a bare value $u_i$ or a casted value
$\cast{\ev_{2i}}{u_i} :: \cS'_1$. In the latter case, the application
expression combines evidence, which may fail with $\error$. If it
succeeds, we call the combined evidence $\ev'_{2i}$. The application rule then
applies: it may fail with $\error$ if the evidence $\ev'_{2i}$ cannot be
combined with the evidence for the function parameter. In all of the failure
cases, the relation vacuously holds. We therefore consider the only interesting
case, where the applications succeed. We have:
\begin{displaymath}
\begin{array}{rl}
 & \cast{\ev_1}{(\lambda x^{\cS_1}.\subst_i(t^{\cS_2}))_\clx} \;
\cast{\ev'_{2i}}{u_i}\\ 
\red & \cast{\ev_r}{([\cast{\ev_a}{u_i ::
                            \cS_1}/x^{\cS_1}] \subst_i(\itm{2}))}
:: \cS_2 \cljoin \clx
\end{array}
\end{displaymath}
where $\ev_r$ and $\ev_{ai}$ are the new evidences for the return
value and argument, respectively.
We then extend the substitutions to map $x^{\cS_1}$ to the casted
arguments:
\begin{displaymath}
\subst'_i = \subst_i \{ x^{\cS_1} \mapsto \cast{\ev_a}{u_i :: \cS_1} \}
\end{displaymath}
By Lemma~\ref{lm:relasc}.1, $(\cast{\ev_{a1}}{u_1 :: \cS_1}) \rel 
(\cast{\ev_{a2}}{u_2 :: \cS_1}) : \cS_1$\\So $\Gamma, x^{\cS_1} |- \subst'_1 \rel \subst'_2$. By induction hypothesis:
\begin{displaymath}
\subst'_1(\itm{2}) \rel \subst'_2(\itm{2}) : \rcomp{\cS_2}
\end{displaymath}
By the definition of substitution, this is exactly:
\begin{displaymath}
[ \cast{\ev_a}{u_1 :: \cS_1}/x^{\cS_1}] \subst_1(\itm{2}) \rel 
[ \cast{\ev_a}{u_2 :: \cS_1}/x^{\cS_1}] \subst_2(\itm{2}) : \rcomp{\cS_2}
\end{displaymath}
Finally, since $\ev_r |- \cS_2 \csub \cS_2 \cljoin \clx$, by
Lemma~\ref{lm:relasc}.2:
\begin{displaymath}
\begin{array}{rcl}
 & \cast{\ev_r}{[ \cast{\ev_a}{u_1 :: \cS_1}/x^{\cS_1}] \subst_1(\itm{2})
  :: \cS_2 \cljoin \clx } & \\
\rel & \cast{\ev_r}{[ \cast{\ev_a}{u_2 :: \cS_1}/x^{\cS_1}] \subst_2(\itm{2})
:: \cS_2 \cljoin \clx} 
& :  \rcomp{\cS_2 \cljoin \clx}
\end{array}
\end{displaymath}
By backward preservation of the relations (Lemma~\ref{lm:relred}), this implies that:
\begin{displaymath}
\begin{array}{rcl}
& \cast{\ev_1}{(\lambda x^{\cS_1}.\subst_1(t^{\cS_2}))_\clx}  
\; \cast{\ev_2}{v_1} & \\
\rel & 
\cast{\ev_1}{(\lambda
x^{\cS_1}.\subst_2(t^{\cS_2}))_\clx}
\; \cast{\ev_2}{v_2}
& : \rcomp{\cS_2 \cljoin \clx}
\end{array}
\end{displaymath}

\end{case}

\begin{center}--------\end{center}
\begin{case}[$\oplus$]
$\itm{} = \cast{\ev_1}{\itm{1}} \;\oplus^\clx\; \cast{\ev_2}{\itm{2}}$\\[0.5em]

\noindent By definition of substitution and Lemma~\ref{lm:if-subst-pres}:
\begin{displaymath}
\subst_i(\itm{}) = 
\cast{\ev_1}{\subst_i(\itm{1})} \;\oplus^\clx\;
\cast{\ev_2}{\subst_i(\itm{2})} \in \TermT{\cS}
\end{displaymath}
By induction hypotheses:
\begin{displaymath}
\subst_1(\itm{1}) \rel \subst_2(\itm{1}) : \rcomp{\cS_1} \text{ and }
\subst_1(\itm{2}) \rel \subst_2(\itm{2}) : \rcomp{\cS_2}
\end{displaymath}
By definition of related computations:
\begin{displaymath}
\begin{array}{c}
\subst_1(\itm{1}) \red^{*} v_{11} \wedge \subst_2(\itm{1}) \red^{*} v_{21}
=> v_{11} \rel v_{21} : \cS_1\\
\subst_1(\itm{2}) \red^{*} v_{12} \wedge \subst_2(\itm{2}) \red^{*} v_{22}
=> v_{12} \rel v_{22} : \cS_2
\end{array}
\end{displaymath}                 
By Lemma~\ref{lm:ifcf}, each $v_{ij}$ is either a boolean
$(b_{ij})_{\clx_{ij}}$ or a
casted boolean $\cast{\ev_{ij}}{(b_{ij})_{\clx'_{ij}}} :: \cS_j$. 
In case a value $v_{ij}$ is a casted value, then the whole term
$\subst_i(\itm{})$ can take a step by (R$g$), combining $\ev_i$ with
$\ev_{ij}$. Such a step either fails, or
succeeds with a new combined evidence. Therefore, either:
\begin{displaymath}
\subst_i(\itm{}) \red^{*} \error
\end{displaymath}
in which case we do not care since
we only consider termination-insensitive noninterference, or:
\begin{displaymath}
\begin{array}{rll}
\subst_i(\itm{}) & \red^{*}&
\cast{\ev'_1}{(b_{i1})_{\clx'_{i1}}} \;\oplus^\clx\;
\cast{\ev'_2}{(b_{i2})_{\clx'_{i2}}}\\[0.3em]
& \red& \cast{\ev'}{(b_i)_{\clx'_i}} :: \Bool_\clx
\end{array}
\end{displaymath}
with $b_i = b_{i1} \llbracket\oplus\rrbracket b_{i2}$ and $\clx'_i =
\clx'_{i1}\cljoin\clx'_{i2}$. It remains to show that:
\begin{displaymath}
(b_1)_{\clx'_1} 
\rel
(b_2)_{\clx'_2} 
: \Bool_\clx
\end{displaymath}
If $\clx \ncsubl \lobs$, then the result holds trivially because
all boolean values are related. If $\clx \csubl \lobs$, then also
$\clx'_i \csubl \lobs$, which means by definition of $\rel$ on boolean
values, that $b_{11} = b_{21}$ and $b_{12} = b_{22}$, so $b_1 = b_2$.

\end{case}

\begin{center}--------\end{center}
\begin{case}[app]

$\itm{} = \cast{\ev_1}{\itm{1}} \iapp{\cS_{11} ->_\clx \cS_{12}}  \cast{\ev_2}{\itm{2}}$ \\
with $\ev_1 |- \cS_1 \csub S_{11} ->_{\clx} S_{12}$, $\ev_2 |- \cS_2
\csub \cS_{11}$,  
and $\cS = \cS_{12} \cljoin \clx$.\\
We omit the $\iapp{\cS_{11} ->_\clx \cS_{12}}$ operator in applications below.\\[0.3em]

\noindent By definition of substitution and Lemma~\ref{lm:if-subst-pres}:
\begin{displaymath}
\subst_i(\itm{}) = 
\cast{\ev_1}{\subst_i(\itm{1})} \;
\cast{\ev_2}{\subst_i(\itm{2})} \in \TermT{\cS}
\end{displaymath}
By induction hypothesis:
\begin{displaymath}
\subst_1(\itm{1}) \rel \subst_2(\itm{1}) : \rcomp{\cS_1} \text{ and }
\subst_1(\itm{2}) \rel \subst_2(\itm{2}) : \rcomp{\cS_2}
\end{displaymath}
By definition of related computations:
\begin{displaymath}
\begin{array}{c}
\subst_1(\itm{1}) \red^{*} v_{11} \wedge \subst_2(\itm{1}) \red^{*} v_{21}
=> v_{11} \rel v_{21} : \cS_1 \\
\subst_1(\itm{2}) \red^{*} v_{12} \wedge \subst_2(\itm{2}) \red^{*} v_{22}
=> v_{12} \rel v_{22} : \cS_2
\end{array}
\end{displaymath}
By definition of $\rel$ at values of function type, using $\ev_1$ and
$\ev_2$ to justify the subtyping relations, we have:
\begin{displaymath}
 (\cast{\ev_1}{v_{11}} \; \cast{\ev_2}{v_{12}}) \rel 
  (\cast{\ev_1}{v_{21}} \; \cast{\ev_2}{v_{22}})  : \rcomp{\cS_{12}
    \cljoin \clx}
\end{displaymath}
\end{case}

\begin{center}--------\end{center}
\begin{case}[\textsf{if}]
$\itm{} =
\ite{\cast{\ev_1}{\itm{1}}}{\cast{\ev_2}{\itm{2}}}{\cast{\ev_3}{\itm{3}}}$,
with 
$\itm{i} \in \TermT{\cS_i}$, 
$\ev_1 |- \cS_1 \csub \Bool_{\clx_1} $ and
$\cS = (\cS_2 \cssubjoin \cS_3) \cjoin \clx_1 $\\
By definition of substitution:
\begin{displaymath}
\subst_i(\itm{}) = 
\ite{\cast{\ev_1}{\subst_i(\itm{1})}}{\cast{\ev_2}{\subst_i(\itm{2})}}
{\cast{\ev_3}{\subst_i(\itm{3})}}
\end{displaymath}
If $\clx \ncsubl \lobs$, then 
$\subst_1(\itm{})  \rel \subst_2(\itm{}) : \rcomp{\cS}$
holds
trivially because the $\rel$ relations relate all such well-typed
terms.
Let us assume $\clx \csubl \lobs$. By the induction hypothesis we have
that:
\begin{displaymath}
\subst_1(\itm{1}) \rel \subst_2(\itm{1}) : \rcomp{\cS_1}
\end{displaymath}
Assuming $\subst_i(\itm{1}) \red^{*} v_{i1}$, by the definition of
$\rel$ we have:
\begin{displaymath}
v_{11} \rel v_{21} : \cS_1
\end{displaymath}
By Lemma~\ref{lm:ifcf}, each $v_{i1}$ is either a boolean
$(b_{i1})_{\clx_{i1}}$ or a
casted boolean $\cast{\ev_{i1}}{(b_{i1})_{\clx'_{i1}}} :: \cS_1$. 
In either case, $\cS_1 \csub \Bool_{\clx_1}$ implies $\cS_1 =
\Bool_{\clx'_1}$, so by definition of $\rel$ on boolean values, 
$b_{11} = b_{21}$.\\
In case a value $v_{i1}$ is a casted value, then the whole term
$\subst_i(\itm{})$ can take a step by (R$g$), combining $\ev_i$ with
$\ev_{i1}$. Such a step either fails, or
succeeds with a new combined evidence. Therefore, either:
\begin{displaymath}
\subst_i(\itm{}) \red^{*} \error
\end{displaymath}
in which case we do not care since
we only consider termination-insensitive noninterference, or:
\begin{displaymath}
\begin{array}{rll}
\subst_i(\itm{}) & \red^{*}&
\ite{\cast{\ev'_1}{(b_{i1})_{\clx'_1}}}
{\cast{\ev_2}{\subst_i(\itm{2})}}
{\cast{\ev_3}{\subst_i(\itm{3})}} 
\end{array}
\end{displaymath}
Because $b_{11} = b_{21}$, both $\subst_1(\itm{})$ and
$\subst_2(\itm{})$ step into the same branch of the conditional. Let us
assume the condition is true (the other case is similar). Then:
\begin{displaymath}
\subst_i(\itm{}) \red \cast{\ev_2}{\subst_i(\itm{2}) :: \cS}
\end{displaymath}
By induction hypothesis:
\begin{displaymath}
\subst_1(\itm{2}) \rel \subst_2(\itm{2}) : \rcomp{\cS_2}
\end{displaymath}
Assume $\subst_i(\itm{2}) \red^{*} v_{i2}$, then $v_{12} \rel v_{22} :
\cS_2$. Since $\ev_2 |- \cS_2 \csub \cS$, by Lemma~\ref{lm:relasc} we have:
\begin{displaymath}
(\cast{\ev_2}{\subst_1(\itm{2}) :: \cS}) \rel
(\cast{\ev_2}{\subst_2(\itm{2}) :: \cS}) :: \rcomp{\cS}
\end{displaymath}
\end{case}

\begin{center}--------\end{center}
\begin{case}[::]
$\itm{} = \cast{\ev}\itm{1}::\cS$, with $\itm{1} \in \TermT{\cS_1}$
and $\ev_1 |- \cS_1 \csub \cS$. 

\noindent By definition of substitution:
\begin{displaymath}
\subst_i(\itm{}) = \cast{\ev_1}\subst_i(\itm{1})::\cS
\end{displaymath}
By induction hypothesis:
\begin{displaymath}
\subst_1(\itm{1}) \rel \subst_2(\itm{1}) : \rcomp{\cS_1}
\end{displaymath}
The result follows directly by Lemma~\ref{lm:relasc}.

\end{case}
\end{proof}

\section{Related Work and Conclusion}

The design of a gradual security-typed language is a novel
contribution. Despite the fact that both \citet{disney11flow} and
\citet{fennellThiemann:csf2013} have proposed languages for security
typing dubbed gradual, they do not propose gradual source languages.
Rather, the language designs require explicit security casts---which
can also be encoded with a label test expression in
Jif~\cite{zhengMyers:ijis2007}.
Furthermore, both designs treat an unlabeled type as having the top label, then
allowing explicit casts downward in the security lattice.  This design is
analogous to the internal language of the quasi-static typing approach.  In
that approach, explicit casts work well, but the external language there
accepts too many programs.  That difficulty was the original motivation for
consistency in gradual typing~\cite{GradualTyping}.

\citet{thiemannFennell:esop2014} develop a generic approach to gradualize
annotated type systems. This is similar to security typing (labels are one kind
of annotation), except that they only consider annotation on base types, and
the language only includes explicit casts, like the gradual security work
discussed above.  They track blame and provide a translation that removes
unnecessary casts.

\paragraph{Acknowledgments} We thank Mat{\'i}as Toro for feedback and contributing
the proofs of Propositions 1 and 2 in Appendix~\ref{sec:auxiliary-proofs}.

\bibliographystyle{abbrvnat}
\bibliography{references}{}

\appendix

\section{Auxiliary Proofs}
\label{sec:auxiliary-proofs}

\syntypein*
\begin{proof}
  By induction on $\Gamma |- t : S$.  

  As most of the type rules are identical, most of the cases are
  straightforward. The exceptions to this are the (Sif) and (Sapp) rules.
  \begin{case}[Sif]
    Then \\
    \begin{equation*}
      \D = 
      \infer{
        \Gamma |- \ite{t_0}{t_1}{t_2} : (S_1 \subjoin S_2) \ljoin \lx
        }{
          \deduce{\Gamma |- t_0 : \Bool_{\lx}}{\D_0} &
          \deduce{\Gamma |- t_1 : S_1}{\D_1} &
          \deduce{\Gamma |- t_2 : S_2}{\D_2}
        }
      \end{equation*}
      By Lemma~\ref{subtypesubjoin}, $S_1 <: (S_1 \subjoin S_2)$ and
      $S_2 <: (S_1 \subjoin S_2)$, and by Lemma~\ref{subtypeljoin}
      $(S_1 \subjoin S_2) <: (S_1 \subjoin S_2) \ljoin \lx$, therefore
      $S_1 <: (S_1 \subjoin S_2) \ljoin \lx$ and
      $S_2 <: (S_1 \subjoin S_2) \ljoin \lx$.\\
      Combining these with the induction hypotheses, we get
      \begin{equation*}
        \mathcal{E} = 
        \infer[]{
          \Gamma |-^{Z} \ite{t_0}{t_1}{t_2} : (S_1 \subjoin S_2) \ljoin \lx
        }{
          \deduce{\Gamma |-^{Z} t_0 : \Bool_{\lx}}{\E_0} &
          \overline{
            \infer[]{ %
              \Gamma |-^{Z} t_i : (S_i \subjoin S_2) \ljoin \lx
            }{
              \deduce{\Gamma |-^{Z} t_i : S_i}{\E_i} & 
              S_i <: (S_1 \subjoin S_2) \ljoin  \lx
            }}
        }
      \end{equation*}

  \end{case}

  \begin{case}[Sapp]
    Then $t = t_1\;t_2$ and $\Gamma |- t_1\;t_2: S_{12} \ljoin \lx$ for some $S_{12}$ and $\lx$ such that $\Gamma |- t_1 : S_{11} ->_{\lx} S_{12}$, $\Gamma |- t_2 : S_2$ and $S_2 <: S_{11}$. Using induction hypothesis on $t_2$ we know that $\Gamma |-^{Z} t_2 : S_2$. As $S_2 <: S_{11}$. Then by (\lsec-SUB) $\Gamma |-^{Z} t_2 : S_{11}$.
    Using induction hypothesis on $t_1$, $\Gamma |-^{Z} t_1 : S_{11} ->_{\lx} S_{12}$, then by (\lsec-APP) we conclude that $\Gamma |-^{Z} t_1\;t_2: S_{12} \ljoin \lx$
  \end{case}

\end{proof}

\begin{lemma}
  \label{subtypesubjoin}
  Let $S_1, S_2 \in \Type$.  Then
  \enumerate
    \item If $(S_1 \subjoin S_2)$ is defined then $S_1 <: (S_1 \subjoin S_2)$.
    \item If $(S_1 \submeet S_2)$ is defined then $(S_1 \submeet S_2) <: S_1$.
  \endenumerate
\end{lemma}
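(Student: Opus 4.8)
The plan is to prove the two claims simultaneously, by induction on the structure of $S_1$. A mutual argument is unavoidable: the function-type clause of $\subjoin$ recurses through $\submeet$ on the argument components and through $\subjoin$ on the result components, and symmetrically the function-type clause of $\submeet$ recurses through $\subjoin$ on arguments and $\submeet$ on results. Throughout I would appeal to the two elementary label-lattice facts $\lx \subl \lx \ljoin \lx'$ and $\lx \lmeet \lx' \subl \lx$, which are immediate from the fact that $\ljoin$ and $\lmeet$ are the join and meet of the security label lattice.

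First I would dispatch the base case $S_1 = \Bool_{\lx}$. Since $\subjoin$ is undefined on mismatched shapes, $S_1 \subjoin S_2$ being defined forces $S_2 = \Bool_{\lx'}$ and hence $S_1 \subjoin S_2 = \Bool_{(\lx \ljoin \lx')}$; then $\Bool_{\lx} \sub \Bool_{(\lx \ljoin \lx')}$ holds by the $\Bool$ subtyping rule applied to $\lx \subl \lx \ljoin \lx'$. Dually, $S_1 \submeet S_2$ being defined forces $S_2 = \Bool_{\lx'}$ and $S_1 \submeet S_2 = \Bool_{(\lx \lmeet \lx')}$, and $\Bool_{(\lx \lmeet \lx')} \sub \Bool_{\lx}$ follows from $\lx \lmeet \lx' \subl \lx$.

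For the inductive case $S_1 = S_{11} ->_{\lx} S_{12}$, definedness again forces $S_2 = S_{21} ->_{\lx'} S_{22}$, and the subexpressions $S_{11} \submeet S_{21}$, $S_{12} \subjoin S_{22}$ (for the join clause) and $S_{11} \subjoin S_{21}$, $S_{12} \submeet S_{22}$ (for the meet clause) are then defined and built from the proper subterms $S_{11}, S_{12}$ of $S_1$, so the induction hypothesis applies to them. For the first claim I would use the function subtyping rule to reduce $S_{11} ->_{\lx} S_{12} \sub (S_{11} \submeet S_{21}) ->_{(\lx \ljoin \lx')} (S_{12} \subjoin S_{22})$ to its three premises: the contravariant argument premise $(S_{11} \submeet S_{21}) \sub S_{11}$, discharged by the induction hypothesis for the \emph{second} claim on $S_{11}$; the covariant result premise $S_{12} \sub (S_{12} \subjoin S_{22})$, discharged by the induction hypothesis for the \emph{first} claim on $S_{12}$; and the label premise $\lx \subl \lx \ljoin \lx'$. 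The second claim is entirely symmetric: the function subtyping rule reduces $(S_{11} \subjoin S_{21}) ->_{(\lx \lmeet \lx')} (S_{12} \submeet S_{22}) \sub S_{11} ->_{\lx} S_{12}$ to $S_{11} \sub (S_{11} \subjoin S_{21})$ (first claim on $S_{11}$), $(S_{12} \submeet S_{22}) \sub S_{12}$ (second claim on $S_{12}$), and $\lx \lmeet \lx' \subl \lx$.

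I do not expect a genuine obstacle; the proof is essentially bookkeeping. The one point that requires care is variance: the argument component of a $\subjoin$ must be handled by the $\submeet$ claim and conversely, which is precisely why the two statements have to be proved by a single mutual induction, and one should double-check that every recursive appeal is to a structurally smaller type so that the induction is well founded. The symmetric bounds $S_2 \sub S_1 \subjoin S_2$ and $S_1 \submeet S_2 \sub S_2$, should they be needed, follow from the same argument together with the commutativity of $\ljoin$ and $\lmeet$.
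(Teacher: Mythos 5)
Your proposal is correct and follows essentially the same route as the paper: a mutual structural induction on $S_1$ (the paper phrases it as ``case analysis'' but invokes the induction hypotheses of both claims exactly as you do), with the contravariant argument position handled by claim (2) and the covariant result position by claim (1), and the label premises discharged by $\lx \subl \lx \ljoin \lx'$ and $\lx \lmeet \lx' \subl \lx$. Your treatment is, if anything, slightly more careful about well-foundedness and about spelling out the symmetric case for $\submeet$, which the paper only sketches.
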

\begin{proof}
  We start by proving (1) assuming that $(S_1 \ljoin S_2)$ is defined. We proceed by case analysis on $S_1$.
  \begin{case}[$\Bool_\lx$]
    If $S_1 = \Bool_{\lx_1}$ then as $(S_1 \subjoin S_2)$ is defined then $S_2$ must have the form $\Bool_{\lx_2}$ for some $\lx_2$. Therefore $(S_1 \subjoin S_2) = \Bool_{(\lx_1 \ljoin \lx_2)}$. But by definition of $\subl$, $\lx_1 \subl (\lx_1 \ljoin \lx_2)$ and therefore we use ($<:_{\Bool}$) to conclude that $\Bool_{\lx_1} <: \Bool_{(\lx_1 \ljoin \lx_2)}$, i.e. $S_1 <: (S_1 \subjoin S_2)$.
  \end{case}

  \begin{case}[$S ->_\lx S$]
    If $S_1 = S_{11} ->_{\lx_1} S_{12}$ then as $(S_1 \subjoin S_2)$ is defined then $S_2$ must have the form $S_{21} ->_{\lx_2} S_{22}$ for some $S_{21}, S_{22}$ and $\lx_2$.\\
    We also know that $(S_1 \subjoin S_2) = (S_{11} \submeet S_{21}) ->_{(\lx_1 \ljoin \lx_2)} (S_{12} \submeet S_{22})$. By definition of $\subl$, $\lx_1 \subl (\lx_1 \ljoin \lx_2)$. 
    Also, as $(S_1 \subjoin S_2)$ is defined then $(S_{11} \submeet S_{21})$ is defined. 
    Using the induction hypothesis of (2) on $S_{11}$, $(S_{11} \submeet S_{21}) <: S_{11}$. Also, using the induction hypothesis of (1) on $S_{12}$ we also know that $S_{12} <: (S_{12} \submeet S_{22})$.
    Then by ($<:_{->}$) we can conclude that $S_{11} ->_{\lx_1} S_{12} <: (S_{11} \submeet S_{21}) ->_{(\lx_1 \ljoin \lx_2)} (S_{12} \submeet S_{22})$, i.e. $S_1 <: (S_1 \subjoin S_2)$.
  \end{case}

  The proof of (2) is  similar to (1) but using the argument that\\ ${(\lx_1 \lmeet \lx_2) \subl \lx_1}$.
\end{proof}

\begin{lemma}
  \label{subtypeljoin}
  Let $S \in \Type$ and $\lx \in \Label$. Then $S <: S \ljoin \lx$.
\end{lemma}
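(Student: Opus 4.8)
The plan is to prove the statement by a simple case analysis on the structure of $S$. The key observation is that label stamping is a \emph{total} operation that only rewrites the outermost label: $\Bool_{\lx_0} \ljoin \lx = \Bool_{(\lx_0 \ljoin \lx)}$ and $(S_1 ->_{\lx_0} S_2) \ljoin \lx = S_1 ->_{(\lx_0 \ljoin \lx)} S_2$. So in either case $S$ and $S \ljoin \lx$ agree everywhere except at the top-level label, which in $S \ljoin \lx$ is the join of the original label with $\lx$. Since $\ljoin$ on labels is the join of the security lattice, we always have $\lx_0 \subl \lx_0 \ljoin \lx$; this is the only order-theoretic fact needed.

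Before the case analysis I would dispatch one auxiliary fact: subtyping is \emph{reflexive}, i.e.\ $S' <: S'$ for every $S' \in \Type$. This is a routine structural induction on $S'$ — the $\Bool_{\lx'}$ case uses $\lx' \subl \lx'$ and rule $(<:_{\Bool})$, and the $S'_1 ->_{\lx'} S'_2$ case uses $(<:_{\to})$ with the two induction hypotheses (for domain and codomain) and $\lx' \subl \lx'$. I expect reflexivity of $<:$ to be the one ingredient not already isolated in the excerpt, so I would state and prove it as a small preliminary lemma (or cite it if it is available elsewhere).

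With these in hand the two cases are immediate. If $S = \Bool_{\lx_0}$, then $S \ljoin \lx = \Bool_{(\lx_0 \ljoin \lx)}$, and since $\lx_0 \subl \lx_0 \ljoin \lx$ rule $(<:_{\Bool})$ gives $\Bool_{\lx_0} <: \Bool_{(\lx_0 \ljoin \lx)}$, i.e.\ $S <: S \ljoin \lx$. If $S = S_1 ->_{\lx_0} S_2$, then $S \ljoin \lx = S_1 ->_{(\lx_0 \ljoin \lx)} S_2$; instantiating rule $(<:_{\to})$ with $S_1 <: S_1$, $S_2 <: S_2$ (reflexivity) and $\lx_0 \subl \lx_0 \ljoin \lx$ yields $S_1 ->_{\lx_0} S_2 <: S_1 ->_{(\lx_0 \ljoin \lx)} S_2$, as required.

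I do not anticipate any real obstacle: unlike Lemma~\ref{subtypesubjoin}, stamping is total, so no definedness side-condition intrudes, and the whole argument reduces to reflexivity of $<:$ plus the single lattice inequality $\lx_0 \subl \lx_0 \ljoin \lx$. The only point requiring mild care is ordering the development so that reflexivity of subtyping is in scope before this lemma is invoked (in particular it is used, together with Lemma~\ref{subtypesubjoin}, in the (Sif) and (Sapp) cases of the conservativity argument).
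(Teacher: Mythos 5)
Your proof is correct and follows essentially the same route as the paper's, which is a one-line case analysis on $S$ using the fact that $\lx_0 \subl (\lx_0 \ljoin \lx)$. Your only addition is to make explicit the reflexivity of $<:$ needed in the function-type case, which the paper leaves implicit.
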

\begin{proof}
  Straigthforward case analysis on type $S$ using the fact that $\lx \subl (\lx' \ljoin \lx)$ for any $\lx'$.
\end{proof}

\begin{lemma}
  \label{lemma:subtypetwoljoin}
  Let $S_1, S_2 \in \Type$ such that $S_1 <: S_2$, and let $\lx_1, \lx_2 \in \Label$ such that $\lx_1 \subl \lx_2$. Then $S_1 \ljoin \lx_1 <: S_2 \ljoin \lx_2$.
\end{lemma}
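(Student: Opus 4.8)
The plan is to proceed by case analysis on the last rule used in the derivation of $S_1 \sub S_2$. There are only two subtyping rules, and in both cases the key observation is that label stamping $\ljoin$ merely rewrites the outermost security label of a type, leaving the argument and result components of a function type untouched; hence no genuine recursion is required and this is really a two-sided version of Lemma~\ref{subtypeljoin}. The single ingredient beyond unfolding definitions is monotonicity of the lattice join: if $\lx \subl \lx'$ and $\lx_1 \subl \lx_2$ then $\lx \ljoin \lx_1 \subl \lx' \ljoin \lx_2$, which holds in any lattice since $\lx \subl \lx' \subl \lx' \ljoin \lx_2$ and $\lx_1 \subl \lx_2 \subl \lx' \ljoin \lx_2$, together with the join being a least upper bound.

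In the $(<:_{\Bool})$ case we have $S_1 = \Bool_{\lx}$, $S_2 = \Bool_{\lx'}$, and $\lx \subl \lx'$. Then $S_1 \ljoin \lx_1 = \Bool_{(\lx \ljoin \lx_1)}$ and $S_2 \ljoin \lx_2 = \Bool_{(\lx' \ljoin \lx_2)}$; join monotonicity gives $\lx \ljoin \lx_1 \subl \lx' \ljoin \lx_2$, so $(<:_{\Bool})$ concludes.

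In the $(<:_{->})$ case we have $S_1 = S_{11} ->_{\lx} S_{12}$, $S_2 = S_{21} ->_{\lx'} S_{22}$, with $S_{21} \sub S_{11}$, $S_{12} \sub S_{22}$, and $\lx \subl \lx'$. By the definition of stamping, $S_1 \ljoin \lx_1 = S_{11} ->_{(\lx \ljoin \lx_1)} S_{12}$ and $S_2 \ljoin \lx_2 = S_{21} ->_{(\lx' \ljoin \lx_2)} S_{22}$. The premises $S_{21} \sub S_{11}$ and $S_{12} \sub S_{22}$ are reused verbatim, join monotonicity gives $\lx \ljoin \lx_1 \subl \lx' \ljoin \lx_2$, and $(<:_{->})$ concludes.

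No real obstacle is expected: the only subtlety is resisting the temptation to invoke an induction hypothesis on $S_{11}$ and $S_{12}$ in the arrow case — since stamping does not descend into them, the original subderivations $S_{21} \sub S_{11}$ and $S_{12} \sub S_{22}$ already suffice. If preferred, monotonicity of $\ljoin$ can be stated beforehand as a one-line auxiliary fact about the security lattice rather than inlined.
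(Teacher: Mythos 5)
Your proposal is correct and matches the paper's proof, which is exactly the ``straightforward case analysis \ldots{} using the definition of label stamping'' you carry out: the Bool and arrow cases each reduce to monotonicity of $\ljoin$ on labels, and in the arrow case the sub-derivations $S_{21} \sub S_{11}$ and $S_{12} \sub S_{22}$ are reused unchanged since stamping only touches the top-level label. Your explicit note that no induction hypothesis is needed is a fair elaboration of what the paper leaves implicit.
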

\begin{proof}
  Straightforward case analysis on type $S$ using the definition of \emph{label stamping} on types.
\end{proof}

\insyntype*
\begin{proof}
  By induction on derivations of $\Gamma |-^{Z} t : S$.

  We proceed by case analysis on $t$ (modulo (\lsec-SUB)). As most of the type
  rules are identical, most of the cases are straightforward. The exception to
  this is case (\lsec-COND) and (\lsec-APP):
  \begin{case}[\lsec-COND]
    Then $t = \ite{t'}{t_1}{t_2}$ and \\
    ${\Gamma |-^{Z} \ite{t'}{t_1}{t_2} : S_z \ljoin \lx}$ for some $S_z$, and $l$ such that $\Gamma |-^{Z} t' : \Bool_\lx$, $\Gamma |-^{Z} t_1 : S \ljoin \lx$ and $\Gamma |-^{Z} t_2 : S \ljoin \lx$.\\
    Using induction hypothesis on the premises we also know that $\Gamma |- t' : \Bool_{\lx'}$ for some $\lx' \subl \lx$, $\Gamma |- t_1 : S_1'$ for some $S_1' <: S \ljoin \lx$, and $\Gamma |- t_2 : S_2'$ for some $S_2' <: S \ljoin \lx$.\\
    By (Sif), $\Gamma |- \ite{t'}{t_1}{t_2} : (S_1' \ljoin S_2') \ljoin \lx'$.
    Then by Lemma~\ref{subtypesubjoin} we know that $(S_1' \ljoin S_2') <: S_z$, and by Lemma~\ref{lemma:subtypetwoljoin} if we choose $S' = (S_1' \ljoin S_2') \ljoin \lx'$, we conclude that $(S_1' \ljoin S_2') \ljoin \lx' <: S_z \ljoin \lx$, i.e. $S' <: S$.
  \end{case}

  \begin{case}[\lsec-APP]
    Then $t = t_1\;t_2$ and $\Gamma |-^{Z} t_1\;t_2 : S_{12} \ljoin \lx$ for some $S_{12}$ and $\lx$ such that $\Gamma |-^{Z} t_1 : S_{11} ->_{\lx} S_{12}$ and $\Gamma |-^{Z} t_2 : S_{11}$.\\
    Using induciton hypothesis on the premises we also know that $\Gamma |- t_1 : S_{11}' ->_{\lx'} S_{12}'$ for some $S_{11}', S_{12}'$ and $\lx'$ such that $S_{11} <: S_{11}', S_{12}' <: S_{12}$ and $\lx' \subl \lx$, and that $\Gamma |- t_2 : S_{12}''$ such that $S_{11}'' <: S_{11}$. By transitivity on subtyping then $S_{11}'' <: S_{11}'$. Then by (Sapp) $\Gamma |- t_1\;t_2 : S_{12}' \ljoin \lx'$ and by Lemma~\ref{lemma:subtypetwoljoin} if we choose $S' = S_{12}' \ljoin \lx'$, we conclude that $S_{12}' \ljoin \lx' <: S_{12} \ljoin \lx$, i.e. $S' <: S$.
  \end{case}

\end{proof}

\begin{lemma}
 \label{lemma:subtypesubjoin}
 Let $S_1, S_2$ and $S_3 \in \Type$. 
 \enumerate
  \item If $(S_1 \subjoin S_2)$ is defined, $S_1 <: S_3$ and $S_1 <: S_3$ then $(S_1 \subjoin S_2) <: S_3$.
  \item If $(S_1 \submeet S_2)$ is defined, $S_3 <: S_1$ and $S_3 <: S_2$ then $S_3 <: (S_1 \submeet S_2)$.
 \endenumerate
\end{lemma}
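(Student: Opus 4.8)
The plan is to establish (1) and (2) simultaneously, by induction on the combined size of the argument types, using nothing beyond the lattice structure of security labels under $\subl$: namely that $\ljoin$ is a least upper bound and $\lmeet$ a greatest lower bound. (I read the evident typo in (1) as ``$S_1 <: S_3$ \emph{and} $S_2 <: S_3$''.) Since $(S_1 \subjoin S_2)$ is defined only when $S_1$ and $S_2$ have the same top-level constructor, and likewise for $(S_1 \submeet S_2)$, each of (1) and (2) splits into a $\Bool$ subcase and a function subcase.

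For the $\Bool$ subcase of (1), from $\Bool_{\lx_1} <: S_3$ and $\Bool_{\lx_2} <: S_3$ I get $S_3 = \Bool_{\lx_3}$ with $\lx_1 \subl \lx_3$ and $\lx_2 \subl \lx_3$; the least-upper-bound property gives $\lx_1 \ljoin \lx_2 \subl \lx_3$, hence by ($<:_{\Bool}$) we have $(S_1 \subjoin S_2) = \Bool_{\lx_1 \ljoin \lx_2} <: \Bool_{\lx_3} = S_3$. The $\Bool$ subcase of (2) is the exact dual, using that $\lmeet$ is a greatest lower bound.

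For the function subcase of (1), write $S_1 = S_{11} ->_{\lx_1} S_{12}$ and $S_2 = S_{21} ->_{\lx_2} S_{22}$; the hypothesis $S_1 <: S_3$ forces $S_3 = S_{31} ->_{\lx_3} S_{32}$. Inverting the two subtyping hypotheses through ($<:_{->}$) yields the contravariant facts $S_{31} <: S_{11}$ and $S_{31} <: S_{21}$, the covariant facts $S_{12} <: S_{32}$ and $S_{22} <: S_{32}$, and $\lx_1 \subl \lx_3$, $\lx_2 \subl \lx_3$. Definedness of $(S_1 \subjoin S_2) = (S_{11} \submeet S_{21}) ->_{\lx_1 \ljoin \lx_2} (S_{12} \subjoin S_{22})$ guarantees $(S_{11} \submeet S_{21})$ and $(S_{12} \subjoin S_{22})$ are defined, so the induction hypothesis of (2) on the domains gives $S_{31} <: (S_{11} \submeet S_{21})$, the induction hypothesis of (1) on the codomains gives $(S_{12} \subjoin S_{22}) <: S_{32}$, and $\lx_1 \ljoin \lx_2 \subl \lx_3$ follows as before; assembling these with ($<:_{->}$) yields $(S_1 \subjoin S_2) <: S_3$. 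The function subcase of (2) is the mirror image: exchange $\subjoin \leftrightarrow \submeet$ and $\ljoin \leftrightarrow \lmeet$ throughout, and swap the appeals to the induction hypotheses of (1) and (2).

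I do not expect a genuine obstacle. The two points to watch are keeping the direction of the function subtyping rule straight (domains contravariant, codomains and labels covariant), and arranging the mutual induction so it is well-founded — which it is, since every recursive appeal in the function subcases is to a strict subterm (a domain or codomain of one of the $S_i$), so an induction on $\mathit{size}(S_1) + \mathit{size}(S_2)$ with (1) and (2) proved together suffices. No appeal to reflexivity or transitivity of subtyping, nor to Lemma~\ref{subtypesubjoin}, is needed.
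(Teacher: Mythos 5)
Your proof is correct and follows essentially the same route as the paper's: a mutual structural induction on (1) and (2), splitting into the $\Bool$ and function-type cases, using the least-upper-bound/greatest-lower-bound properties of $\ljoin$/$\lmeet$ and the contravariance of domains in ($<:_{->}$). You also correctly read the typo in (1) as ``$S_2 <: S_3$'' and correctly use $\subjoin$ (not $\submeet$) for the codomain of the function join, fixing a slip that appears in the paper's own write-up of that case.
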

\begin{proof}
  We start by proving (1) by case analysis on type $(S_1 \subjoin S_2)$.
  \begin{case}[$\Bool_{\lx}$]
    As $(S_1 \subjoin S_2)$ is defined then $S_1 = \Bool_{\lx_1}$ and $S_2 = \Bool_{\lx_2}$ for some $\lx_1$ and $\lx_2$. Also as $S_1 <: S_3$, then $S_3 = \Bool_{\lx_3}$ for some $\lx_3$. 
    By ($<:_{\Bool}$), $\lx_1 \subl \lx_3$ and $\lx_2 \subl \lx_3$ then by definition of $\ljoin$, $(\lx_1 \ljoin \lx_2) \subl \lx_3$. Then $(S_1 \subjoin S_2) = \Bool_{(\lx_1 \ljoin \lx_2)} <: \Bool_{\lx_3}$, i.e.  $(S_1 \subjoin S_2) <: S_3$.
  \end{case}
  \begin{case}[$S ->_{\lx} S$]
    As $(S_1 \subjoin S_2)$ is defined then $S_1 = S_{11} ->_{\lx_1} S_{12}$ and $S_2 = S_{21} ->_{\lx_1} S_{22}$ for some $S_{11}, S_{12}, S_{21}, S_{22}, \lx_1$ and $\lx_2$. Also as $S_1 <: S_3$, then $S_3 = S_{31} ->_{\lx_3} S_{32}$ for some $S_{31}, S_{32}$ and $\lx_3$.
    By ($<:_{->}$), $S_{31} <: S_{11}, S_{12} <: S_{32}, S_{31} <: S_{21}, S_{22} <: S_{32}, \lx_1 \subl \lx_3$ and $\lx_2 \subl \lx_3$.\\
    Then $(S_1 \subjoin S_2) = (S_{11} \submeet S_{21}) ->_{(\lx_1 \ljoin \lx_2)} (S_{12} \submeet S_{22})$. By using induction hypothesis (2) then $S_{31} <: (S_{11} \submeet S_{21})$ and by using induction hypothesis (1) then $(S_{12} \submeet S_{22}) <: S_{32}$. Also by definition of $\ljoin$, $(\lx_1 \ljoin \lx_2) \subl \lx_3$. Finally by ($<:_{->}$) we conclude that $(S_{11} \submeet S_{21}) ->_{(\lx_1 \ljoin \lx_2)} (S_{12} \submeet S_{22}) <: S_{31} ->_{\lx_3} S_{32}$, i.e.  $(S_1 \subjoin S_2) <: S_3$.
  \end{case}

  The proof of (2) is  similar to (1) but using the argument that if $\lx_3 \subl \lx_1$ and $\lx_3 \subl \lx_2$ then   ${\lx_3 \subl (\lx_1 \lmeet \lx_2)}$.
\end{proof}

\end{document}